\font\bb=msbm10 at 12pt
\newcommand{\mathbb}[1]{\hbox{\bb #1}}
\newtheorem{claim}{Claim}[section]
\newtheorem{theorem}[claim]{Theorem}
\newtheorem{definition}[claim]{Definition}
\newtheorem{corollary}[claim]{Corollary}
\begin{document}

\title[On the Effective Size of a Non-Weyl Graph]{On the Effective Size of a
Non-Weyl Graph}
\author{Ji\v{r}\'{\i} Lipovsk\'{y}}

\address{Department of Physics, Faculty of Science, University of Hradec
Kr\'{a}lov\'{e}, Rokitansk\'{e}ho 62, 500\,03 Hradec Kr\'{a}lov\'{e},
Czechia} \ead{jiri.lipovsky@uhk.cz} 
\vspace{10pt} 
\begin{indented}
\item[]June 2016
\end{indented}

\begin{abstract}
We show how to find the coefficient by the leading term of the resonance
asymptotics using the method of pseudo orbit expansion for quantum graphs
which do not obey the Weyl asymptotics. For a non-Weyl graph we develop a
method how to reduce the number of edges of a corresponding directed graph.
Through this method we prove bounds on the above coefficient depending on
the structure of the graph for graphs with the same lengths of internal
edges. We explicitly find the positions of the resolvent resonances.
\end{abstract}

\pacs{03.65.Ge, 03.65.Nk, 02.10.Ox}
%
\vspace{2pc} \noindent\textit{Keywords}: quantum graphs, resonances, Weyl
asymptotics %
%
%
%

\section{Introduction}

Quantum graphs have been intensively studied mainly in the last thirty
years. There is extensive bibliography we refer to, e.g. the book \cite{BK},
the proceedings \cite{AGA} and references therein. One of the models studied
are quantum graphs with attached semiinfinite leads. For this model, the
notion of resonances can be defined; there are two main definitions:
resolvent resonances (singularities of the resolvent) and scattering
resonances (singularities of the scattering matrix). For the study of
resonances in quantum graphs we refer e.g. to \cite{ESr, Ex2, Ex3, KSc, BSS,
EL1, EL2}.

The problem of finding resonance asymptotics in quantum graphs has been
addressed in \cite{DP, DEL, EL3}. A surprising observation by Davies and
Pushnitski \cite{DP} shows that a graph has in some cases fewer resonances
than one would expect from Weyl asymptotics. Criteria, which can distinguish
these non-Weyl graphs from the graphs with regular Weyl asymptotics, have
been presented in \cite{DP} (graphs with standard coupling), and in \cite%
{DEL} (graphs with general coupling). Although distinguishing between these
two cases is quite easy (it depends on the vertex properties of the graph),
finding the leading term of asymptotics, which is closely related to the
``effective size'' of the graph, is more
difficult since it uses the structure of the whole graph.

In the paper we express the first term of non-Weyl asymptotics using the
method of pseudo orbits and find bounds on the effective size that depend
on the structure of the graph. The paper is divided into the following eight
sections: in the second section we introduce the model itself, in the third
section we state known theorems on the asymptotics of the resonances, in the
fourth section we develop the method of pseudo orbit expansion for the
resonance condition, in the fifth section we state how the effective size
can be found for an equilateral graph (a graph with the same lengths of the
internal edges), in the sixth section we develop a method that allows us to
delete some of the edges of a non-Weyl graph, in the seventh section we
state the main theorems on the bounds on the effective size and position of
the resonances for equilateral graphs, and finally the eighth section
illustrates found results and developed method in three examples.

\section{Preliminaries}

First, we describe the main notions. We consider a metric graph $\Gamma $
that consists of the set of vertices $\mathcal{V}$, the set of $N$ finite
internal edges $\mathcal{E}_{\mathrm{i}}$ of lengths $\ell _{j}$ (these
edges can be parametrized by intervals $(0,\ell _{j})$), and the set of $M$
semiinfinite external edges $\mathcal{E}_{\mathrm{e}}$ (parametrized by
intervals $(0,\infty )$). The graph is equipped by the second order
differential operator $H$ that acts as $-\mathrm{d}^{2}/\mathrm{d}x^{2}$ on
internal and external edges. The domain of this operator consists of
functions with edge components in Sobolev space $W^{2,2}(e_{j}),$ and at the
same time satisfying the coupling conditions at the vertices 
\begin{equation}
(U_{j}-I)\Psi _{j}+i(U_{j}+I)\Psi _{j}'=0\,,  \label{eq-coupling1}
\end{equation}%
where $U_{j}$ is a $d_{j}\times d_{j}$ unitary matrix ($d_{j}$ is the degree
of the vertex), $I$ is a $d_{j}\times d_{j}$ identity matrix, $\Psi _{j}$ is
the vector of limits of functional values to the vertex from attached edges
and, similarly, $\Psi _{j}'$ is the vector of limits of outgoing
derivatives. This general form of the coupling was independently introduced
by Kostrykin and Schrader \cite{KS2} and Harmer \cite{Ha}.

There is a trick shown in \cite{EL2, Ku3} how to describe the coupling on
the whole graph by one big $(2N+M)\times (2N+M)$ unitary matrix $U$ that is
similar to a block diagonal matrix with blocks $U_{j}$. (Square matrices $A$ 
and $B$ are similar if there exists a regular square matrix~$P$ 
such that $A = P B P^{-1}$.) This matrix encodes
not only the coupling but also the topology of the graph. 
\begin{equation}
(U-I)\Psi +i(U+I)\Psi'=0\,.  \label{eq-coupling2}
\end{equation}%
In the previous equation $I$ is $(2N+M)\times (2N+M)$ identity matrix, and
the vectors $\Psi $ and $\Psi'$ consist of entries $\Psi _{j}$ and 
$\Psi _{j}'$. This coupling condition corresponds to a graph where
all the vertices are joined into one vertex.

One can describe the effective coupling on the compact part of the graph
using energy-dependent $2N\times 2N$ coupling matrix $\tilde{U}(k)$, where $k
$ is the square root of energy. There is a straightforward way \cite{EL2}
how to construct this effective coupling matrix by a standard method by
Schur, etc. 
\begin{equation}
\tilde{U}(k)=U_{1}-(1-k)U_{2}[(1-k)U_{4}-(1+k)I]^{-1}U_{3}\,,
\label{eq-utilde}
\end{equation}%
where the matrices $U_{1},\dots ,U_{4}$ are blocks of the matrix $U=\left( 
\begin{array}{cc}
U_{1} & U_{2} \\ 
U_{3} & U_{4}%
\end{array}%
\right) $; $U_{1}$ corresponds to coupling between internal edges, $U_{4}$
corresponds to coupling between external edges, and $U_{2}$ and $U_{3}$
correspond to the mixed coupling. The coupling condition has a similar form
to (\ref{eq-coupling2}) with $U$ replaced by $\tilde{U}(k)$ and $I$ meaning
now $2N\times 2N$ identity matrix.

\section{Asymptotics of the number of resonances}

\label{sec-asym} We are interested in asymptotical behavior of the number of
resolvent resonances of the system. The \emph{resolvent resonance} is
usually defined as the pole of the meromorphic continuation of the resolvent 
$(H-\lambda \mathrm{id})^{-1}$ into the second sheet. The resolvent
resonances can be obtained by the method of complex scaling (more on that in 
\cite{EL1}). We use a simpler but equivalent definition.

\begin{definition}
We call $\lambda = k^2$ a resolvent resonance if there exists a generalized
eigenfunction $f\in L^2_{\mathrm{loc}}(\Gamma)$, $f\not\equiv 0$ satisfying 
$-f''(x) = k^2 f(x)$ on all edges of the graph and fulfilling the
coupling conditions (\ref{eq-coupling1}) that on all external edges behaves
as $c_j \,\mathrm{e}^{ikx}$.
\end{definition}

This definition is equivalent to the previously mentioned one because these
generalized eigenfunctions (which, of course, are not square integrable),
become after complex scaling square integrable. It was proven in \cite{EL1}
that the set of resolvent resonances is equal to the union of the set of
scattering resonances, and the eigenvalues with the eigenfunctions supported
only on the internal part of the graph.

We define the counting function $N(R)$ that counts the number of all
resolvent resonances including their multiplicities contained in the circle
of radius $R$ in the $k$-plane. One should note that by this method we count
twice more resonances than in the energy plane. This is clear for the case
of a compact graph because $k$ and $-k$ correspond to the same eigenvalue.

Asymptotics of the resonances in most of the quantum graphs obeys Weyl's
law~-- equation~(\ref{eq-weyl}) with $W$ equal to the sum of the lengths of
the internal edges, i.e. $\mathrm{vol\,}\Gamma $. However, there exist such
quantum graphs for which the leading term of the asymptotics is smaller than
expected; we call these graphs \emph{non-Weyl}. The problem was studied for
graphs with standard coupling (in literature also called Kirchhoff, free or
Neumann coupling) in \cite{DP}, and for graphs with general coupling in 
\cite{DEL}. We define the standard coupling and state main results presented
in these papers.

\begin{definition}
For the standard coupling functional values are continuous at each vertex
and the sum of outgoing derivatives is equal to zero, i.e. for the vertex $v$
with degree $d$ we have 
\begin{eqnarray*}
f_i (v) = f_j(v) \equiv f(v)\,, \quad \forall i, j \in 1, \dots, d, \\
\sum_{j = 1}^d f_j'(v) = 0\,.
\end{eqnarray*}
\end{definition}

\begin{theorem}
(Davies and Pushnitski)\newline
Let us assume a graph $\Gamma $ with standard coupling at all vertices and
the sum of the lengths of all internal edges equal to $\mathrm{vol\,}\Gamma $. 
Then the number of resolvent resonances has the following asymptotics 
$$
N(R)=\frac{2}{\pi }WR+\mathcal{O}(1)\,,\quad \mathrm{as\ }R\rightarrow
\infty \,, 
$$
where $0\leq W\leq \mathrm{vol\,}\Gamma $. One has $W<\mathrm{vol\,}\Gamma $
iff there exists at least one balanced vertex, i.e. a vertex that connects
the same number of internal and external edges.
\end{theorem}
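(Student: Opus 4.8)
The plan is to rewrite the resonance condition as the vanishing of an entire function of exponential type and to read the leading term of $N(R)$ off the width of that function's set of frequencies; the geometry of the vertices then enters only through the largest frequency.

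First I would set up the secular function. Put $f_j(x)=a_j\mathrm{e}^{ikx}+b_j\mathrm{e}^{-ikx}$ on each internal edge $e_j$ of length $\ell_j$ and $c_m\mathrm{e}^{ikx}$ on each external edge, and impose the coupling conditions (\ref{eq-coupling1}); equivalently one works on the compact part with the effective matrix $\tilde U(k)$ of (\ref{eq-utilde}). Eliminating the $c_m$ leaves a homogeneous $2N\times 2N$ system $M(k)(a,b)^{\mathrm{T}}=0$ whose entries are constants and propagation factors $\mathrm{e}^{\pm ik\ell_j}$, so $\lambda=k^2$ is a resonance exactly when $F(k):=\det M(k)=0$. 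Expanding the determinant by the pseudo-orbit expansion gives $F(k)=\sum_{\sigma}p_\sigma\,\mathrm{e}^{i\sigma k}$, a finite exponential sum whose frequencies $\sigma$ are the metric lengths of pseudo orbits on the internal part, hence $\sigma\in[0,2\,\mathrm{vol}\,\Gamma]$. The empty orbit always contributes $p_0=1$, so the smallest frequency is $0$ with non-zero coefficient.

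Second, I would count the zeros of $F$. Writing $w=ik$ turns $F$ into an exponential sum with real frequencies filling $[\sigma_{\min},\sigma_{\max}]$, where $\sigma_{\min}=0$ and $\sigma_{\max}$ is the largest frequency carrying a non-zero coefficient. By the classical theory of zeros of such functions, the zeros accumulate in a strip about the real $k$-axis with linear density $(\sigma_{\max}-\sigma_{\min})/(2\pi)$ in $\mathrm{Re}\,k$, and only finitely many lie outside any fixed strip; hence $N(R)=\frac{\sigma_{\max}-\sigma_{\min}}{\pi}R+\mathcal{O}(1)$. Comparison with the asserted formula identifies $W=(\sigma_{\max}-\sigma_{\min})/2=\sigma_{\max}/2$. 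Since no pseudo orbit is longer than $2\,\mathrm{vol}\,\Gamma$ and none has negative length, $0\le W\le\mathrm{vol}\,\Gamma$, with equality on the right exactly when $p_{2\,\mathrm{vol}\,\Gamma}\neq 0$.

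Third, I would compute the extreme coefficient $p_{2\,\mathrm{vol}\,\Gamma}$, which is the amplitude of the single longest pseudo orbit, traversing every directed internal bond once and never escaping to a lead. Its amplitude is built only from internal-to-internal scattering, so up to a non-vanishing phase $p_{2\,\mathrm{vol}\,\Gamma}=\pm\det S=\pm\prod_{v\in\mathcal{V}}\det S^{(v)}_{\mathrm{ii}}$, where $S$ is the bond-scattering matrix and $S^{(v)}_{\mathrm{ii}}$ is the internal block of the vertex scattering matrix; the factorization holds because, read in the ``incoming'' and ``outgoing'' bases, $S$ is block diagonal over the vertices up to a relabelling permutation of the directed bonds. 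For standard coupling the vertex scattering matrix is $S^{(v)}=\frac{2}{d_v}J-I$ with $J$ the all-ones matrix, so its internal block is $\frac{2}{d_v}J_{p_v}-I_{p_v}$, where $p_v$ and $q_v=d_v-p_v$ count the internal and external edges at $v$. Using that $J_{p_v}$ has eigenvalues $p_v$ and $0$ one finds
\[
\det S^{(v)}_{\mathrm{ii}}=\det\Big(\frac{2}{d_v}J_{p_v}-I_{p_v}\Big)=(-1)^{p_v-1}\,\frac{p_v-q_v}{d_v},
\]
which vanishes precisely when $p_v=q_v$, i.e. when $v$ is balanced. Therefore $p_{2\,\mathrm{vol}\,\Gamma}=0$ iff at least one vertex is balanced; in that case $\sigma_{\max}$ drops below $2\,\mathrm{vol}\,\Gamma$ (by at least $2\min_j\ell_j$) and so $W<\mathrm{vol}\,\Gamma$, while otherwise $W=\mathrm{vol}\,\Gamma$.

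I expect the main obstacle to be the analytic second step: justifying that the remainder in the zero count is genuinely $\mathcal{O}(1)$ rather than $o(R)$, which needs the precise asymptotic distribution of zeros of exponential sums together with the fact that all but finitely many resonances sit in a bounded strip about the real axis. The supporting algebraic point, the clean factorization $\det S=\pm\prod_v\det S^{(v)}_{\mathrm{ii}}$, is the key computation that localizes the non-Weyl effect to individual vertices.
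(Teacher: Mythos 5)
The paper states this theorem as a known result of Davies and Pushnitski and gives no proof where it is quoted, but your argument is essentially the one the paper itself develops in Sections 4--5 to re-derive the criterion: the secular determinant $\det(\mathrm{e}^{ikL}Q\tilde\Sigma-I_{2N})$ is an exponential sum with constant term $1$ and top coefficient $\det(Q\tilde\Sigma)=\pm\prod_v\det\tilde\sigma_v$, the zero count is read off the width of the frequency band via Theorem~3.1 of \cite{DEL}, and $\det\bigl(\frac{2}{d_v}J_{p_v}-I_{p_v}\bigr)=(-1)^{p_v-1}(p_v-q_v)/d_v$ vanishes exactly at balanced vertices. Your proof is correct; the analytic step you flag as the main obstacle (the $\mathcal{O}(1)$ remainder in the zero count for exponential sums) is precisely the point the paper delegates to that cited theorem rather than proving from scratch.
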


\begin{theorem}
(Davies, Exner and Lipovsk\'{y})\newline
Let us assume a graph $\Gamma$ with general coupling (\ref{eq-coupling2})
and with the sum of the internal edges equal to $\mathrm{vol\,}\Gamma$. The
number of resolvent resonances has the asymptotics 
\begin{equation}
N(R) = \frac{2}{\pi} WR + \mathcal{O}(1)\,,\quad \mathrm{as\ }R \to \infty\,,
\label{eq-weyl}
\end{equation}
where $0\leq W \leq \mathrm{vol\,}\Gamma$. One has $W < \mathrm{vol\,}\Gamma$
(the graph is non-Weyl) iff the effective coupling matrix $\tilde U(k)$ has
at least one eigenvalue equal to either $\frac{1+k}{1-k}$ or $\frac{1-k}{1+k}
$ for all $k$.
\end{theorem}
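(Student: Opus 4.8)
The plan is to recast the resonance condition as the vanishing of a single entire function of $k$ and then extract the leading asymptotics of $N(R)$ from the spread of its exponents. First I would assemble the secular determinant on the compact part. On each internal edge $e_j$ of length $\ell_j$ I write the generalized eigenfunction in the two-outgoing-waves basis $f_j(x)=\gamma_j^0\,\mathrm{e}^{ikx}+\gamma_j^\ell\,\mathrm{e}^{ik(\ell_j-x)}$, and use the purely outgoing Ansatz $c_j\,\mathrm{e}^{ikx}$ on the external edges, which is precisely what lets me impose the effective coupling condition, i.e. (\ref{eq-coupling2}) with $U$ replaced by the matrix $\tilde U(k)$ of (\ref{eq-utilde}). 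Collecting the boundary values $\Psi$ and outgoing derivatives $\Psi'$ into $v=(\gamma_j^0,\gamma_j^\ell)_j$ gives $\Psi=(I+E\Pi)v$ and $\Psi'=ik(I-E\Pi)v$, where $E=E(k)=\mathrm{diag}(\mathrm{e}^{ik\ell_j})$ (each length occurring twice) and $\Pi$ is the involution swapping the two endpoints of every internal edge. Substituting into the effective coupling condition turns it into the homogeneous $2N\times2N$ system $M(k)v=0$ with
\[ M(k)=(1-k)\Bigl[\tilde U(k)-\tfrac{1+k}{1-k}\,I\Bigr]+(1+k)\Bigl[\tilde U(k)-\tfrac{1-k}{1+k}\,I\Bigr]E(k)\,\Pi\,. \]
Thus $\lambda=k^2$ is a resolvent resonance iff $\det M(k)=0$, and this secular function captures the full resonance set of the definition (embedded internal eigenvalues correspond to solutions with vanishing external amplitudes).

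Second, I would study $\det M(k)$ as an exponential sum. Since $\tilde U(k)$ is a rational function of $k$, after multiplying by a suitable power of the scalar polynomial $\det[(1-k)U_4-(1+k)I]$ the determinant becomes a genuine exponential polynomial $F(k)=\sum_n p_n(k)\,\mathrm{e}^{i\sigma_n k}$ with polynomial coefficients and exponents $\sigma_n\in[0,\,2\,\mathrm{vol\,}\Gamma]$ (each edge contributes $0$, $1$ or $2$ factors $\mathrm{e}^{ik\ell_j}$ through $E\Pi$); the finitely many zeros and poles introduced by the scalar factor change $N(R)$ only by $\mathcal O(1)$. Writing $P=(1-k)[\tilde U(k)-\frac{1+k}{1-k}I]$ and $Q=(1+k)[\tilde U(k)-\frac{1-k}{1+k}I]$ and expanding $\det(P+QE\Pi)$, the coefficient of $\mathrm{e}^{0}$ is $\det P=(1-k)^{2N}\det[\tilde U(k)-\frac{1+k}{1-k}I]$, while the coefficient of the top phase $\mathrm{e}^{2ik\,\mathrm{vol\,}\Gamma}$ is $\det(QE\Pi)=(-1)^N(1+k)^{2N}\det[\tilde U(k)-\frac{1-k}{1+k}I]$.

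Third, I would invoke the classical counting result for zeros of exponential polynomials: if the coefficients of the smallest and largest exponents $\sigma_{\min}<\sigma_{\max}$ are not identically zero, then the zeros lie in a fixed horizontal strip, are asymptotically equally spaced, and their number in $|k|\le R$ equals $\frac{\sigma_{\max}-\sigma_{\min}}{\pi}R+\mathcal O(1)$. Applied with $\sigma_{\min}=0$ and $\sigma_{\max}=2\,\mathrm{vol\,}\Gamma$ this reproduces (\ref{eq-weyl}) with $W=\mathrm{vol\,}\Gamma$, provided both extreme coefficients survive. If the top coefficient vanishes identically in $k$, then $\sigma_{\max}$ drops to the next present exponent, which is at most $2\,\mathrm{vol\,}\Gamma-\min_j\ell_j<2\,\mathrm{vol\,}\Gamma$, and symmetrically at the bottom; hence $W<\mathrm{vol\,}\Gamma$. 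By the formulas above, identical vanishing of the extreme coefficients is equivalent to $\det[\tilde U(k)-\frac{1-k}{1+k}I]\equiv0$ or $\det[\tilde U(k)-\frac{1+k}{1-k}I]\equiv0$, i.e. to $\tilde U(k)$ having $\frac{1-k}{1+k}$ or $\frac{1+k}{1-k}$ as an eigenvalue for all $k$. This is the asserted iff, with the strict gap $\min_j\ell_j$ guaranteeing that the equivalence runs in both directions.

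I expect two delicate points. The harder one is the counting lemma with an $\mathcal O(1)$ rather than merely $o(R)$ remainder for exponential polynomials whose coefficients are themselves $k$-dependent polynomials, together with uniform control of the spurious zeros and poles produced when clearing the denominator of $\tilde U(k)$, so that none of these degrade the error term. The second, more bookkeeping-heavy point is the verification that only the single permutation built from $E\Pi$ (respectively from the identity part) contributes to the extreme phase, so that the two closed-form expressions for the extreme coefficients are exactly those displayed above; this is what ties the analytic span of $F$ cleanly to the spectral condition on $\tilde U(k)$.
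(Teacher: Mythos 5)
This theorem is quoted in the paper without proof (it is imported from \cite{DEL}), and your reconstruction is correct and follows essentially the same route as that source and as the machinery the paper itself leans on elsewhere: reduce to a secular determinant $\det\bigl[(1-k)\tilde U(k)-(1+k)I+((1+k)\tilde U(k)-(1-k)I)E(k)\Pi\bigr]=0$, identify the extreme exponential terms $\det P$ and $(-1)^N\det Q\,\mathrm{e}^{2ik\,\mathrm{vol}\,\Gamma}$, and apply the exponential-polynomial zero-counting result (Theorem~3.1 of \cite{DEL}). No substantive gap; your two flagged ``delicate points'' are exactly the ones handled by that counting theorem and by the observation that each column of $QE\Pi$ carries precisely one factor $\mathrm{e}^{ik\ell_j}$.
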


While it is quite easy to find whether a graph is Weyl or non-Weyl (it
depends only on the vertex properties of the graph), determining the
coefficient $W$ (the effective size of the graph) is more complicated,
because it depends on the properties of the whole graph. It is illustrated
e.g. in Theorem 7.3 in \cite{DEL}.

\section{Pseudo orbit expansion for the resonance condition}

\label{sec-pseudo} There is a theory developed showing how to find the
spectrum of a compact quantum graph by the pseudo orbit expansion (see e.g. 
\cite{BHJ, KoS2}). In this section we adjust this method to find the
resonance condition. The idea is to find the effective vertex-scattering
matrix that acts only on the compact part of the graph. The
vertex-scattering matrix maps the vector of amplitudes of the incoming waves
to the vertex into the vector of amplitudes of the outgoing waves. We show
that in the case of standard coupling this matrix is not energy dependent
and has a nicely arranged form.

\begin{definition}
Let us assume a vertex $v$ of the graph $\Gamma$. Let there be $n$ internal
edges emanating from $v$, all parametrized by $(0, \ell_j)$ with $x=0$
corresponding to $v$, and $m$ halflines. Let the solution of Schr\"odinger
equation on these internal edges be $f_j(x) = \alpha_j^{\mathrm{in}} 
\mathrm{e}^{-ikx}+\alpha_j^{\mathrm{out}} \mathrm{e}^{ikx}$, $j = 1, \dots, n$ and
on the external edges $g_s(x) = \beta_s\mathrm{e}^{ikx}$, $s = 1, \dots, m$.
Then the $n\times n$ effective vertex-scattering matrix $\tilde \sigma^{(v)}$
is given by the relation $\vec{\alpha}_v^{out} = \tilde \sigma^{(v)} 
\vec{\alpha}_v^{in}$, where $\vec{\alpha}_v^{in}$ and $\vec{\alpha}_v^{out}$ are
vectors with entries $\alpha_j^{\mathrm{in}}$ and $\alpha_j^{\mathrm{out}}$,
respectively.
\end{definition}

For the next theorem we drop the superscript or subscript $v$ denoting the
vertex in coupling and vertex-scattering matrices.

\begin{theorem}
\label{thm-evsm}(general form of the effective vertex-scattering matrix)\newline
Let us consider a non-compact graph $\Gamma$ with the coupling at the vertex 
$v$ given by (\ref{eq-coupling1}) and the matrix $U$. Let the vertex $v$
connect $n$ internal edges and $m$ external edges. Let the matrix $I$ be 
$(n+m)\times (n+m)$ identity matrix and by $I_n$ we denote the $n\times n$
identity matrix and by $I_m$ the $m\times m$ identity matrix. Then the
effective vertex-scattering matrix is given by $\tilde\sigma (k) = -
[(1-k)\tilde U(k)-(1+k)I_n]^{-1}[(1+k)\tilde U(k)-(1-k)I_n]$. The inverse
relation is $\tilde U(k) = [(1+k)\tilde\sigma(k) + (1-k)I_n][(1-k)\tilde
\sigma(k)+(1+k)I_n]^{-1}$.
\end{theorem}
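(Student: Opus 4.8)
The plan is to substitute the plane-wave ansatz directly into the effective coupling condition and solve the resulting linear relation between the incoming and outgoing amplitudes. First I would translate the boundary data at $v$ into the amplitudes. With $f_j(x)=\alpha_j^{\mathrm{in}}\mathrm{e}^{-ikx}+\alpha_j^{\mathrm{out}}\mathrm{e}^{ikx}$ and $x=0$ corresponding to $v$, the limit of the functional value is $f_j(0)=\alpha_j^{\mathrm{in}}+\alpha_j^{\mathrm{out}}$ and the outgoing derivative (taken in the direction of increasing $x$, i.e.\ away from $v$) is $f_j'(0)=ik(\alpha_j^{\mathrm{out}}-\alpha_j^{\mathrm{in}})$. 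In vector form this reads $\Psi=\vec\alpha^{\mathrm{in}}+\vec\alpha^{\mathrm{out}}$ and $\Psi'=ik(\vec\alpha^{\mathrm{out}}-\vec\alpha^{\mathrm{in}})$, where $\Psi,\Psi'$ now collect only the $n$ internal components at $v$.

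Second, I would use the effective coupling condition on the compact part, which by the remark following (\ref{eq-coupling2}) has the form $(\tilde U(k)-I_n)\Psi+i(\tilde U(k)+I_n)\Psi'=0$; here the $m$ external degrees of freedom have already been eliminated into the energy dependence through the Schur-complement formula (\ref{eq-utilde}), so that the outgoing condition $g_s(x)=\beta_s\mathrm{e}^{ikx}$ is automatically built into $\tilde U(k)$. Substituting the expressions for $\Psi,\Psi'$ and using $i\cdot ik=-k$, I would collect the coefficients of $\vec\alpha^{\mathrm{in}}$ and $\vec\alpha^{\mathrm{out}}$ to obtain
\[
[(1+k)\tilde U(k)-(1-k)I_n]\vec\alpha^{\mathrm{in}}+[(1-k)\tilde U(k)-(1+k)I_n]\vec\alpha^{\mathrm{out}}=0\,.
\]
Solving for $\vec\alpha^{\mathrm{out}}$ by inverting the bracket multiplying it yields exactly $\tilde\sigma(k)=-[(1-k)\tilde U(k)-(1+k)I_n]^{-1}[(1+k)\tilde U(k)-(1-k)I_n]$.

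Third, the inverse relation follows by reading the displayed identity as a matrix Möbius (Cayley-type) transformation and isolating $\tilde U(k)$. Writing $\sigma$ for $\tilde\sigma(k)$ and $U$ for $\tilde U(k)$, I would rearrange $-[(1-k)U-(1+k)I_n]\sigma=(1+k)U-(1-k)I_n$ into $U[(1+k)I_n+(1-k)\sigma]=(1+k)\sigma+(1-k)I_n$ and right-multiply by the inverse of the bracket, giving the stated $\tilde U(k)=[(1+k)\tilde\sigma(k)+(1-k)I_n][(1-k)\tilde\sigma(k)+(1+k)I_n]^{-1}$.

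The computation itself is routine linear algebra; the only point requiring care is the invertibility of the matrices being inverted. The matrix $(1-k)\tilde U(k)-(1+k)I_n$ is singular precisely when $\tilde U(k)$ has the eigenvalue $\frac{1+k}{1-k}$, which is exactly the non-Weyl condition appearing in the Davies--Exner--Lipovsk\'{y} theorem; likewise the inverse relation requires $(1-k)\tilde\sigma(k)+(1+k)I_n$ to be regular. Both conditions fail only on a discrete set of $k$, so I would state the formulas as holding wherever the relevant inverses exist, and verify (using the algebraic identity derived above) that regularity of one bracket is equivalent to regularity of its counterpart, so that the forward and inverse maps are genuinely inverse to each other on their common domain. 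I would also double-check the sign convention for the outgoing derivative against the parametrization $x=0\leftrightarrow v$, since a sign error there would interchange the roles of the two brackets.
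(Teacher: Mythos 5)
Your proof is correct and reaches the stated formulas by the same underlying mechanism as the paper (plane-wave ansatz, elimination of the external amplitudes, then solving the linear relation between $\vec\alpha^{\mathrm{in}}$ and $\vec\alpha^{\mathrm{out}}$), but the execution differs in one respect worth noting. The paper substitutes the ansatz into the full $(n+m)$-dimensional condition (\ref{eq-coupling1}) with the block decomposition $U=\bigl(\begin{smallmatrix}U_1&U_2\\U_3&U_4\end{smallmatrix}\bigr)$, eliminates $\vec\beta$ by hand from the external block of equations, and only then recognizes the Schur-complement combination as $\tilde U(k)$ via (\ref{eq-utilde}); the proof is therefore self-contained modulo that formula. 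You instead invoke the reduced condition $(\tilde U(k)-I_n)\Psi+i(\tilde U(k)+I_n)\Psi'=0$ as a black box, which is asserted in the Preliminaries with a citation to \cite{EL2} but not derived in this paper, and then the remaining computation (collecting $[(1+k)\tilde U(k)-(1-k)I_n]\vec\alpha^{\mathrm{in}}+[(1-k)\tilde U(k)-(1+k)I_n]\vec\alpha^{\mathrm{out}}=0$) coincides with the paper's final step. Your route is shorter and cleanly separates the Schur reduction from the Cayley-type algebra; the paper's buys self-containment. Your closing remarks on invertibility are sound and in fact slightly more careful than the paper: the identity $(1-k)\tilde\sigma(k)+(1+k)I_n=-4k\,[(1-k)\tilde U(k)-(1+k)I_n]^{-1}$ confirms that the two brackets are regular together for $k\neq 0$, a point the paper passes over with ``expressing the inverse relation is straightforward.''
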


\begin{proof}
Let the wavefunction components on the internal edges emanating from the 
vertex~$v$ be $f_j(x) = \alpha_j^{\mathrm{in}}\mathrm{e}^{-ikx}+\alpha_j^{\mathrm{out}}\mathrm{e}^{ikx}$ 
and on the external edges $g_j(x) = \beta_j \mathrm{e}^{ikx}$. The vector of 
functional values therefore is 
$\Psi = \left(\begin{array}{c}\vec\alpha^{\mathrm{in}}+\vec\alpha^{\mathrm{out}}\\ \vec\beta\end{array}\right)$ 
and the vector of outgoing derivatives is 
$\Psi' = ik \left(\begin{array}{c}-\vec\alpha^{\mathrm{in}}+\vec\alpha^{\mathrm{out}}\\ \vec\beta\end{array}\right)$. 
The coupling condition (\ref{eq-coupling1}) gives
$$
  (U-I)\left(\begin{array}{c}\vec\alpha^{\mathrm{in}}+\vec\alpha^{\mathrm{out}}\\ \vec\beta\end{array}\right) + i i k (U+I)\left(\begin{array}{c}-\vec\alpha^{\mathrm{in}}+\vec\alpha^{\mathrm{out}}\\ \vec\beta\end{array}\right) = 0\,.
$$
 Hence we have the set of equations
\begin{eqnarray*}
 \hspace{-5mm}[U_1-I_n-k(U_1+I_n)]\vec\alpha^{\mathrm{out}}+ [(U_1- I_n)+ k(U_1+ I_n)]\vec\alpha^{\mathrm{in}}+(1-k)U_2 \vec\beta = 0\,,\\
 \hspace{-5mm}(1-k) U_3 \vec\alpha^{\mathrm{out}}+(1+k)U_3 \vec\alpha^{\mathrm{in}}+[(U_4-I_m)-k(U_4+I_m)]\vec\beta = 0\,.
\end{eqnarray*}
Expressing $\vec \beta$ from the second equation and substituting it in the first one one has
\begin{eqnarray*}
  \hspace{-1cm}\{(1-k)U_1-(1+k)I_n-(1-k)U_2[(1-k)U_4-(1+k)I_m]^{-1}(1-k)U_3\}\vec\alpha^{\mathrm{out}}+ 
  \\
  \hspace{-1cm}+\{(1+k)U_1-(1-k)I_n-(1-k)U_2[(1-k)U_4-(1+k)I_m](1+k)U_3\}\vec\alpha^{\mathrm{in}} = 0\,,
\end{eqnarray*}
from which using (\ref{eq-utilde}) the claim follows. Expressing the inverse relation is straightforward.
\end{proof}

Using the previous theorem one can straightforwardly compute the effective
vertex-scattering matrix for standard conditions.

\begin{corollary}
\label{cor-stan} Let $v$ be a vertex with $n$ internal and $m$ external
edges with standard coupling conditions (i.e. $U = \frac{2}{n+m}%
J_{n+m}-I_{n+m}$, where $J_n$ denotes $n\times n$ matrix with all entries
equal to one). Then the effective vertex-scattering matrix is $\tilde
\sigma(k) = \frac{2}{n+m}J_n - I_n$, in particular, for a balanced vertex we
have $\tilde \sigma(k) = \frac{1}{n}J_n - I_n$.
\end{corollary}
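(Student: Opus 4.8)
The plan is to feed the standard-coupling matrix through the two formulas already established, namely the expression (\ref{eq-utilde}) for the effective coupling matrix $\tilde U(k)$ and the formula of Theorem~\ref{thm-evsm} relating $\tilde\sigma(k)$ to $\tilde U(k)$. First I would read off the four blocks of $U=\frac{2}{n+m}J_{n+m}-I_{n+m}$: the internal block is $U_1=\frac{2}{n+m}J_n-I_n$, the external block is $U_4=\frac{2}{n+m}J_m-I_m$, and the mixed blocks $U_2,U_3$ are the rectangular all-$\frac{2}{n+m}$ matrices. The whole computation then reduces to the linear algebra of matrices of the form $aJ+bI$, which is tractable because $J$ has rank one; I will repeatedly use the identities $J_m^2=mJ_m$ and $J_n^2=nJ_n$, together with the rectangular rules $J_{n\times m}J_{m\times n}=mJ_n$ and $J_{n\times m}J_m=mJ_{n\times m}$.

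Second, I would compute $\tilde U(k)$. The inner matrix $(1-k)U_4-(1+k)I_m$ equals $\frac{2(1-k)}{n+m}J_m-2I_m$, whose inverse I would find with the ansatz $aJ_m+bI_m$, solving a scalar pair of equations produced by $J_m^2=mJ_m$. Substituting back into (\ref{eq-utilde}) and collapsing every product of all-ones matrices with the identities above, the correction term becomes a scalar multiple of $J_n$, and after combining fractions I expect the compact form $\tilde U(k)=\frac{2}{km+n}J_n-I_n$. It is worth noting that this effective coupling matrix genuinely depends on $k$, even though the final answer will not.

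Third, I would insert this $\tilde U(k)$ into the relation $\tilde\sigma(k)=-[(1-k)\tilde U(k)-(1+k)I_n]^{-1}[(1+k)\tilde U(k)-(1-k)I_n]$ from Theorem~\ref{thm-evsm}. Both bracketed matrices again have the form $aJ_n+bI_n$; I would invert the first one by the same ansatz and multiply out, once more reducing all $J_n^2$ terms. The decisive and least automatic step is checking that, after clearing the common denominator $k(n+m)(km+n)$, the numerator of the $J_n$-coefficient factors as $-2k(km+n)$, so that the factor $(km+n)$ cancels and, more importantly, the entire $k$-dependence disappears. This cancellation is the crux of the statement, and I expect it to be the main place where a stray sign or factor would spoil the result; everything else is bookkeeping. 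The outcome is $\tilde\sigma(k)=\frac{2}{n+m}J_n-I_n$. Finally, specializing to a balanced vertex with $n=m$ gives $\frac{2}{n+m}=\frac{1}{n}$ and hence $\tilde\sigma(k)=\frac{1}{n}J_n-I_n$.
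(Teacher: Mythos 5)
Your proposal is correct and follows essentially the same route the paper takes: it computes $\tilde U(k)=\frac{2}{km+n}J_n-I_n$ from (\ref{eq-utilde}) using the $(aJ_n+bI_n)^{-1}$ formula and then substitutes into Theorem~\ref{thm-evsm}, reducing everything with $J_n^2=nJ_n$. The intermediate expression and the final cancellation match the paper's proof, so there is nothing to add.
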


\begin{proof}
There are two ways how to prove this corollary. The first is to compute the 
effective vertex-scattering matrix from the definition in a similar way to the 
proof of Theorem~\ref{thm-evsm}. The latter, to some extent longer, but 
straightforward, uses the Theorem~\ref{thm-evsm} itself. Using the formula 
$(aJ_n+bI_n)^{-1} = \frac{1}{b}\left(-\frac{a}{an+b}J_n+I_n\right)$ we compute 
the effective coupling matrix and obtain $\tilde U(k) = \frac{2}{k m +n}J_n-I_n$. 
Substituting it in the formula in Theorem \ref{thm-evsm} and using the above 
expression of the inverse matrix, and minding the fact that 
$J_n \cdot J_n = n J_n$ one obtains the result. 
\end{proof}

Having defined the effective vertex-scattering matrix, we can proceed to the
lines of the method shown in \cite{BHJ}. The only difference is that the
vertex-scattering matrix $\sigma ^{(v)}$ introduced in \cite{BHJ} is
replaced by the effective vertex-scattering matrix $\tilde{\sigma}^{(v)}$.
We replace the graph $\Gamma $ by a compact oriented graph $\Gamma _{2}$
that is constructed from the compact part of $\Gamma $ by the following
rule: each internal edge $e_{j}$ of the graph $\Gamma $ is replaced by two
oriented edges (bonds) $b_{j}$, $\hat{b}_{j}$ of the same length as $e_{j}$;
the oriented edges $b_{j}$ and $\hat{b}_{j}$ have the opposite orientations.
On these edges we use the ansatz 
\begin{eqnarray*}
f_{b_{j}}(x) &=&\alpha _{b_{j}}^{\mathrm{in}}\mathrm{e}^{-ikx_{b_{j}}}+%
\alpha _{b_{j}}^{\mathrm{out}}\mathrm{e}^{ikx_{b_{j}}}\,, \\
f_{\hat{b}_{j}}(x) &=&\alpha _{\hat{b}_{j}}^{\mathrm{in}}
\mathrm{e}^{-ikx_{\hat{b}_{j}}}+\alpha _{\hat{b}_{j}}^{\mathrm{out}}%
\mathrm{e}^{ikx_{\hat{b}_{j}}}\,,
\end{eqnarray*}%
where $x_{b_{j}}$ is the coordinate on the bond $b_{j}$ and $x_{\hat{b}_{j}}$
is the coordinate on the bond $\hat{b}_{j}$. Since there is a relation 
$x_{b_{j}}+x_{\hat{b}_{j}}=\ell_{b_{j}}$ with $\ell_{b_{j}} \equiv \ell_j$ being the lengths of
the bond $b_{j}$ or $\hat{b}_{j}$, and since the functional values for both
directed bonds must correspond to each other 
($f_{b_j}(x_{b_j}) = f_{\hat b_j}(\ell_j - x_{\hat b_j})$), we obtain the following
relations between coefficients 
\begin{equation}
\alpha _{b_{j}}^{\mathrm{in}}=\mathrm{e}^{ik\ell _{j}}\alpha_{\hat{b}_{j}}^{\mathrm{out}}\,,
\qquad \alpha _{\hat{b}_{j}}^{\mathrm{in}}=\mathrm{e}^{ik\ell _{j}}\alpha _{b_{j}}^{\mathrm{out}}\,.  
\label{eq-bbhat}
\end{equation}

Now we define several matrices that we will use later.

\begin{definition}
The matrix $\tilde{\Sigma}(k)$ (which is in general energy-dependent) is a
matrix that is similar to a block diagonal matrix with blocks $\tilde{\sigma}_{v}(k)$. 
The similarity transformation is transformation from the basis 
$$
\vec{\alpha}=(\alpha _{b_{1}}^{\mathrm{in}},\dots ,\alpha _{b_{N}}^{\mathrm{in}},
\alpha _{\hat{b}_{1}}^{\mathrm{in}},\dots ,\alpha _{\hat{b}_{N}}^{\mathrm{in}})^{\mathrm{T}} 
$$
to the basis 
$$
(\alpha _{b_{v_{1}1}}^{\mathrm{in}},\dots ,\alpha _{b_{v_{1}d_{1}}}^{\mathrm{in}},
\alpha _{b_{v_{2}1}}^{\mathrm{in}},\dots ,\alpha _{b_{v_{2}d_{2}}}^{\mathrm{in}},
\dots )^{\mathrm{T}}\,, 
$$
where $b_{v_{1}j}$ is the $j$-th edge ending in the vertex $v_{1}$.

Furthermore, we define $2N\times 2N$ matrices $Q = \left(%
\begin{array}{cc}
0 & I_N \\ 
I_N & 0%
\end{array}%
\right)$, scattering matrix $S(k) = Q \tilde\Sigma (k)$ and 
$$
L =\mathrm{diag\,}(\ell_1,\dots , \ell_N,\ell_1,\dots , \ell_N)\,. 
$$
\end{definition}

Using these matrices we can state the following theorem.

\begin{theorem}
\label{thm-rescon} The resonance condition is given by 
$$
\mathrm{det\,}(\mathrm{e}^{ikL} Q \tilde\Sigma(k)- I_{2N}) = 0\,. 
$$
\end{theorem}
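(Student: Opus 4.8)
The plan is to set up the self-consistency equations that the amplitudes $\vec\alpha$ must satisfy on the oriented graph $\Gamma_2$, and to show that the existence of a nontrivial solution is equivalent to the vanishing of the stated determinant. I would organize the amplitudes into two groups of conditions: the \emph{vertex-scattering conditions}, which relate outgoing amplitudes to incoming amplitudes at each vertex, and the \emph{propagation conditions}, which relate the amplitudes of a bond to those of its reverse bond after travelling its length. A resonance exists precisely when these two sets of linear relations admit a common nonzero solution, so the heart of the argument is to assemble them into a single matrix equation and read off the compatibility condition as a determinant.

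First I would use the effective vertex-scattering matrices to write, at each vertex $v$, the relation $\vec\alpha_v^{\mathrm{out}} = \tilde\sigma^{(v)}(k)\,\vec\alpha_v^{\mathrm{in}}$. Collecting these over all vertices and expressing them in the global bond basis $\vec\alpha$ gives, by the very definition of $\tilde\Sigma(k)$ as the matrix similar to the block-diagonal collection of the $\tilde\sigma^{(v)}(k)$ via the change of basis described above, the compact relation $\vec\alpha^{\mathrm{out}} = \tilde\Sigma(k)\,\vec\alpha^{\mathrm{in}}$, where now $\vec\alpha^{\mathrm{in}}$ and $\vec\alpha^{\mathrm{out}}$ are the $2N$-vectors of all incoming and all outgoing bond amplitudes. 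The point of introducing $\tilde\Sigma(k)$ precisely as a similarity transform is that it lets the per-vertex scattering data be rewritten in the uniform bond ordering, so that the vertex conditions become one global matrix identity rather than a list of local ones.

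Next I would encode the propagation relations (\ref{eq-bbhat}). Rewriting $\alpha_{b_j}^{\mathrm{in}} = \mathrm{e}^{ik\ell_j}\alpha_{\hat b_j}^{\mathrm{out}}$ and $\alpha_{\hat b_j}^{\mathrm{in}} = \mathrm{e}^{ik\ell_j}\alpha_{b_j}^{\mathrm{out}}$, I observe that the map sending each outgoing amplitude to the incoming amplitude of the \emph{same} bond after reversal is exactly multiplication by $\mathrm{e}^{ikL}$ composed with the bond-swap permutation $Q$, since $Q$ interchanges the $b_j$ block with the $\hat b_j$ block and $L$ supplies the common phase $\mathrm{e}^{ik\ell_j}$ on each bond. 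Thus the propagation conditions read $\vec\alpha^{\mathrm{in}} = \mathrm{e}^{ikL}\,Q\,\vec\alpha^{\mathrm{out}}$. Substituting the vertex relation $\vec\alpha^{\mathrm{out}} = \tilde\Sigma(k)\vec\alpha^{\mathrm{in}}$ into this yields $\vec\alpha^{\mathrm{in}} = \mathrm{e}^{ikL}\,Q\,\tilde\Sigma(k)\,\vec\alpha^{\mathrm{in}}$, i.e. $\bigl(\mathrm{e}^{ikL}Q\tilde\Sigma(k) - I_{2N}\bigr)\vec\alpha^{\mathrm{in}} = 0$. A nonzero generalized eigenfunction exists iff this homogeneous system has a nontrivial solution, which happens iff the determinant vanishes, giving the claimed condition.

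The main obstacle I anticipate is bookkeeping rather than conceptual: one must verify that the bond ordering implicit in $Q$, $L$, and the similarity defining $\tilde\Sigma(k)$ are mutually consistent, so that the phase factor $\mathrm{e}^{ik\ell_j}$ lands on the correct bond and that $Q$ implements exactly the $b_j \leftrightarrow \hat b_j$ swap appearing in (\ref{eq-bbhat}). I would also need to check that the external-edge amplitudes $\beta_s$, which have already been eliminated inside each $\tilde\sigma^{(v)}(k)$, play no further role — this is precisely what guarantees the resonance condition lives on the $2N$-dimensional internal bond space and not on the full graph. Once the indexing is pinned down, the substitution and the passage to the determinant are routine.
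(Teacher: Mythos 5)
Your proposal is correct and follows essentially the same route as the paper: the paper's proof likewise combines the propagation relations (\ref{eq-bbhat}), written as $\vec\alpha^{\mathrm{in}}=\mathrm{e}^{ikL}Q\vec\alpha^{\mathrm{out}}$, with the global vertex relation $\vec\alpha^{\mathrm{out}}=\tilde\Sigma(k)\vec\alpha^{\mathrm{in}}$ to obtain $\vec\alpha^{\mathrm{in}}=\mathrm{e}^{ikL}Q\tilde\Sigma(k)\vec\alpha^{\mathrm{in}}$ and reads off the solvability condition as the vanishing determinant. The bookkeeping points you flag (consistency of orderings, elimination of the external amplitudes inside $\tilde\sigma^{(v)}$) are exactly what the paper's definitions of $Q$, $L$, $\tilde\Sigma$ and Theorem~\ref{thm-evsm} take care of.
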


\begin{proof}
If we define $\vec{\alpha}_b^{\mathrm{in}} = (\alpha_{b_1}^{\mathrm{in}},\dots \alpha_{b_N}^{\mathrm{in}})^\mathrm{T}$ 
and similarly for the outgoing amplitudes and bonds in the opposite direction 
$\hat b_j$, we can subsequently obtain
$$
  \left(\begin{array}{c}\vec{\alpha}_b^{\mathrm{in}}\\\vec{\alpha}_{\hat{b}}^{\mathrm{in}} \end{array}\right) = \mathrm{e}^{ikL}
\left(\begin{array}{c}\vec{\alpha}_{\hat{b}}^{\mathrm{out}}\\\vec{\alpha}_{b}^{\mathrm{out}} \end{array}\right) = \mathrm{e}^{ikL} Q
\left(\begin{array}{c}\vec{\alpha}_{b}^{\mathrm{out}}\\\vec{\alpha}_{\hat{b}}^{\mathrm{out}} \end{array}\right) =  \mathrm{e}^{ikL} Q
\tilde \Sigma(k)\left(\begin{array}{c}\vec{\alpha}_b^{\mathrm{in}}\\\vec{\alpha}_{\hat{b}}^{\mathrm{in}} \end{array}\right) \,.
$$
First we used relations (\ref{eq-bbhat}) and then definitions of matrices $Q$ 
and $\tilde \Sigma$. Since the vectors on the lhs and rhs are the same, the 
equation of the solvability of the system gives the resonance condition.
\end{proof}

Now we define orbits on the graph using the same notation as in \cite{BHJ}.

\begin{definition}
A \emph{periodic orbit} $\gamma $ on the graph $\Gamma _{2}$ is a closed
path that begins and ends in the same vertex. It can be denoted by the
directed edges that it subsequently contains, e.g. $\gamma
=(b_{1},b_{2},\dots ,b_{n})$. Cyclic permutation of directed bonds does not
change the periodic orbit. A \emph{pseudo orbit} is a collection of periodic
orbits ($\tilde{\gamma}=\{\gamma _{1},\gamma _{2},\dots ,\gamma _{m}\}$). An 
\emph{irreducible pseudo orbit} $\bar{\gamma}$ is a pseudo orbit that
contains no directed bond more than once. The \emph{metric length} of a
periodic orbit is defined as $\ell _{\gamma }=\sum_{b_{j}\in \gamma }\ell
_{b_{j}}$; the length of a pseudo orbit is the sum of the lengths of all
periodic orbits the pseudo orbit is composed of. We denote the product of
scattering amplitudes along the periodic orbit $\gamma =(b_{1},b_{2},\dots
b_{n})$ as $A_{\gamma }=S_{b_{2}b_{1}}S_{b_{3}b_{2}}\dots S_{b_{1}b_{n}}$;
here $S_{b_{i}b_{j}}$ denotes the entry of the matrix $S$ in the row
corresponding to the bond $b_{i}$ and column corresponding to the bond $b_{j}
$. For a pseudo orbit we define $A_{\tilde{\gamma}}=\prod_{\gamma _{j}\in 
\tilde{\gamma}}A_{\gamma _{j}}$. By $m_{\tilde{\gamma}}$ we denote the
number of periodic orbits in the pseudo orbit $\tilde{\gamma}$. The set of
irreducible pseudo orbits also contains irreducible pseudo orbit on zero
edges with $m_{\bar{\gamma}}=0$, $\ell _{\bar{\gamma}}=0$ and $A_{\bar{\gamma}}=1$.
\end{definition}

Without proof, which immediately follows from Theorem~1 in \cite{BHJ}, we
give a theorem on finding the resonance condition using pseudo orbits.

\begin{theorem}
\label{thm-rescon2} The resonance condition is given by 
$$
\sum_{\bar \gamma} (-1)^{m_{\bar \gamma}} A_{\bar \gamma}(k) \,\mathrm{e}^{ik\ell_{\bar \gamma}} = 0\,, 
$$
where the sum goes over all irreducible pseudo orbits $\bar \gamma$.
\end{theorem}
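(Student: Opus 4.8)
The plan is to start from the matrix form of the resonance condition established in Theorem~\ref{thm-rescon}, namely $\det\!\big(M(k)-I_{2N}\big)=0$ with $M(k)=\mathrm{e}^{ikL}Q\tilde\Sigma(k)=\mathrm{e}^{ikL}S(k)$, and to expand this secular determinant as a sum over cycle covers of the directed graph on the $2N$ bonds whose weighted adjacency matrix is $M$. Since $\det(M-I_{2N})=(-1)^{2N}\det(I_{2N}-M)=\det(I_{2N}-M)$, it is equivalent and more convenient to expand $\det(I_{2N}-M)$. Note also that $\det(I_{2N}-M)=\det(I_{2N}-M^{\mathrm{T}})$, which lets me align the index ordering with the convention $S_{b_ib_j}=(\text{row }b_i,\text{ column }b_j)$ used in the definition of $A_\gamma$.

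First I would write the Leibniz formula $\det(I_{2N}-M)=\sum_{\sigma}\mathrm{sgn}(\sigma)\prod_{b}\big(\delta_{b\sigma(b)}-M_{b\sigma(b)}\big)$ and expand each factor. Choosing for each index either the $\delta$-term or the $-M$-term, the $\delta$-terms force their indices to be fixed points of $\sigma$; what survives is, for every subset $T$ of bonds and every permutation $\tau$ of $T$, the contribution $(-1)^{|T|}\,\mathrm{sgn}(\tau)\prod_{b\in T}M_{b\tau(b)}$, which reassembles into $\sum_{T}(-1)^{|T|}\det(M_T)$. Grouping by the cycle decomposition of $\tau$, a cycle $c$ of length $\ell_c$ carries the signature $(-1)^{\ell_c-1}$ together with the factor $(-1)^{\ell_c}$ from the $\ell_c$ copies of $-M$; these combine to the single factor $-1$ per cycle, independent of its length. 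Hence
$$\det(I_{2N}-M)=\sum_{L}(-1)^{|L|}\prod_{c\in L}w(c)\,,$$
where $L$ runs over all collections of vertex-disjoint directed cycles in the bond digraph (including the empty one), $|L|$ is the number of cycles, and $w(c)$ is the product of the entries of $M$ along $c$. This sign bookkeeping, turning the permutation signatures into exactly one factor $-1$ per cycle, is the step I expect to require the most care and is the main obstacle of the argument.

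It then remains to translate the cycle-cover sum into the pseudo-orbit language of the preceding definition. A directed cycle in the bond digraph with nonvanishing weight is precisely a periodic orbit $\gamma$ on $\Gamma_2$, since transitions with $M_{bb'}=0$ are forbidden by the connectivity encoded in $Q\tilde\Sigma$; the weight therefore factorizes as $w(c)=\mathrm{e}^{ik\ell_\gamma}A_\gamma$, the diagonal $\mathrm{e}^{ikL}$ contributing $\prod_{b\in\gamma}\mathrm{e}^{ik\ell_b}=\mathrm{e}^{ik\ell_\gamma}$ and the entries of $S$ contributing the product $A_\gamma$ of scattering amplitudes along $\gamma$. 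Vertex-disjointness means no directed bond is repeated, so each collection $L$ is exactly an irreducible pseudo orbit $\bar\gamma$ with $|L|=m_{\bar\gamma}$ and $\prod_{c\in L}w(c)=\mathrm{e}^{ik\ell_{\bar\gamma}}A_{\bar\gamma}$, while the empty collection yields the zero-edge pseudo orbit contributing $1$. Substituting gives
$$\det(I_{2N}-M)=\sum_{\bar\gamma}(-1)^{m_{\bar\gamma}}A_{\bar\gamma}(k)\,\mathrm{e}^{ik\ell_{\bar\gamma}}\,,$$
and equating to zero is the claim.

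Finally, I would observe that this is precisely the content of Theorem~1 of \cite{BHJ} applied to the secular matrix $\mathrm{e}^{ikL}Q\tilde\Sigma(k)$: the only change relative to \cite{BHJ} is that the ordinary vertex-scattering matrices $\sigma^{(v)}$ are replaced throughout by the effective vertex-scattering matrices $\tilde\sigma^{(v)}(k)$ of Theorem~\ref{thm-evsm}, which enter the construction of $\tilde\Sigma(k)$ in exactly the same way; hence the combinatorial identity transfers verbatim. Once the per-cycle sign is settled, the identification of vertex-disjoint cycle covers with irreducible pseudo orbits and the factorization of the weights into length and amplitude factors are routine.
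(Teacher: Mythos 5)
Your proposal is correct and matches the paper's route: the paper simply invokes Theorem~1 of \cite{BHJ} (with $\sigma^{(v)}$ replaced by $\tilde\sigma^{(v)}$), and your expansion of $\det(I_{2N}-\mathrm{e}^{ikL}Q\tilde\Sigma(k))$ over vertex-disjoint cycle covers, with the sign bookkeeping giving one factor $-1$ per cycle, is exactly the content of that cited result. The transposition remark correctly reconciles the Leibniz indexing with the convention $A_\gamma=S_{b_2b_1}\cdots S_{b_1b_n}$, so nothing is missing.
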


In order to make a summary, let us highlight the main steps of the pseudo
orbit expansion.

\begin{itemize}
\item we construct effective vertex-scattering matrices for each vertex (see
Theorem~\ref{thm-evsm}) and we find the scattering matrix $S(k) = Q \tilde
\Sigma(k)$

\item we construct the oriented graph $\Gamma_2$: we ``cut off'' the
halflines and replace each edge of the compact part of $\Gamma$ by two edges
of opposite directions and the same lengths

\item for each irreducible pseudo orbit we find $m_{\bar\gamma}$, $%
\ell_{\bar\gamma}$ and $A_{\bar\gamma}$ (this is found from the scattering
matrix or directly from the vertex-scattering matrices)

\item we find contribution of each irreducible pseudo orbit on $\Gamma_2$ by
Theorem~\ref{thm-rescon2} and sum them up

\item the contribution of the irreducible pseudo orbit on zero edges is 1
\end{itemize}

For illustration of the procedure we refer to the example in subsection~\ref%
{ex0} or to \cite{Li}.

\section{Effective size of an equilateral graph}

As we stated in section \ref{sec-asym}, the effective size of a graph is
defined as a $\frac{\pi}{2}$-multiple of the coefficient of the leading term
of the asymptotics. In this section we find the effective size of an
equilateral graph from matrices $Q$ and $\tilde \Sigma$.

\begin{definition}
By an equilateral graph we mean a graph that has the same lengths of all
internal edges.
\end{definition}

First, we show a general criterion whether the graph is non-Weyl using the
notions of Theorem \ref{thm-rescon}.

\begin{theorem}
A graph is non-Weyl iff $\mathrm{det\,}\tilde \Sigma (k) = 0$ for all $k\in\hbox{\bb C}$. 
In other words, a graph is non-Weyl iff there exists a vertex
for which $\mathrm{det\,}\tilde\sigma_v(k) = 0$ for all $k\in \hbox{\bb C}$.
\end{theorem}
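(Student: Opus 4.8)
The plan is to connect the statement's claim about $\det\tilde\Sigma(k)=0$ to the characterization of non-Weyl graphs already established in the Davies--Exner--Lipovsk\'y theorem, which says the graph is non-Weyl iff $\tilde U(k)$ has an eigenvalue equal to $\frac{1\pm k}{1\mp k}$ for all $k$. So the first step is to recall Theorem~\ref{thm-evsm}, which gives the precise relation between the effective vertex-scattering matrix $\tilde\sigma_v(k)$ and the effective coupling matrix $\tilde U(k)$, namely $\tilde\sigma(k)=-[(1-k)\tilde U(k)-(1+k)I_n]^{-1}[(1+k)\tilde U(k)-(1-k)I_n]$. Since $\tilde\Sigma(k)$ is similar to the block-diagonal matrix built from the $\tilde\sigma_v(k)$, its determinant is the product of the $\det\tilde\sigma_v(k)$, and so $\det\tilde\Sigma(k)=0$ for all $k$ iff some single vertex block satisfies $\det\tilde\sigma_v(k)=0$ for all $k$. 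This reduces the global statement to a purely local, per-vertex computation, which is why the second sentence of the theorem is the natural intermediate claim.

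I then want to compute $\det\tilde\sigma_v(k)$ from the formula in Theorem~\ref{thm-evsm}. Taking determinants of the product expression gives
$$
\det\tilde\sigma(k)=(-1)^n\,\frac{\det[(1+k)\tilde U(k)-(1-k)I_n]}{\det[(1-k)\tilde U(k)-(1+k)I_n]}\,.
$$
Thus $\det\tilde\sigma(k)=0$ precisely when the numerator vanishes, i.e. when $\det[(1+k)\tilde U(k)-(1-k)I_n]=0$, which means $\frac{1-k}{1+k}$ is an eigenvalue of $\tilde U(k)$. One must also check that when the numerator vanishes identically, the denominator does not, so that the quotient genuinely equals zero rather than being an indeterminate $0/0$; this is where the distinction between the two eigenvalue cases $\frac{1+k}{1-k}$ and $\frac{1-k}{1+k}$ enters. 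The core algebraic identity is therefore that $\det\tilde\sigma_v(k)\equiv 0$ iff $\tilde U(k)$ has $\frac{1-k}{1+k}$ as an eigenvalue for all $k$, and a symmetric argument (or the observation that the eigenvalue condition in the cited theorem is stated with ``either\,\dots\,or'') covers the companion value $\frac{1+k}{1-k}$.

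Having tied $\det\tilde\sigma_v(k)=0$ to the eigenvalue condition on $\tilde U(k)$, the final step is simply to invoke the Davies--Exner--Lipovsk\'y theorem, whose non-Weyl criterion is exactly that $\tilde U(k)$ have such an eigenvalue for all $k$. Chaining the equivalences---non-Weyl $\Leftrightarrow$ $\tilde U(k)$ has eigenvalue $\frac{1\pm k}{1\mp k}$ for all $k$ $\Leftrightarrow$ $\det\tilde\sigma_v(k)=0$ for all $k$ at some vertex $\Leftrightarrow$ $\det\tilde\Sigma(k)=0$ for all $k$---yields the theorem.

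\textbf{The main obstacle} I anticipate is the careful handling of the denominator $\det[(1-k)\tilde U(k)-(1+k)I_n]$: one must be sure it is not identically zero when the numerator vanishes, and more delicately, one must confirm that the two eigenvalue possibilities $\frac{1+k}{1-k}$ and $\frac{1-k}{1+k}$ cannot simultaneously force both determinants to vanish in a way that makes $\tilde\sigma_v(k)$ ill-defined rather than singular. The inverse in Theorem~\ref{thm-evsm} is only valid where $(1-k)\tilde U(k)-(1+k)I_n$ is invertible, so some argument by analytic continuation (the matrices are rational in $k$, so the identity $\det\tilde\sigma_v(k)\equiv 0$ holds on all of $\mathbb{C}$ as soon as it holds on a subset with a limit point) is needed to make ``for all $k$'' rigorous rather than ``for generic $k$''. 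Once that analyticity point is dispatched, the remainder is bookkeeping with determinants of products and the similarity invariance of $\det\tilde\Sigma$.
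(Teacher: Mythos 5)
Your route is genuinely different from the paper's: the paper never touches the Davies--Exner--Lipovsk\'y eigenvalue criterion for $\tilde U(k)$. Instead it works directly with the secular determinant of Theorem~\ref{thm-rescon}, observes that the term with the highest exponent is $\mathrm{det\,}[Q\tilde\Sigma(k)]\,\mathrm{e}^{2ik\sum_j\ell_j}$ while the lowest term is $1$, and then invokes the zero-counting theorem for exponential polynomials (Theorem~3.1 of \cite{DEL}): the asymptotics saturates $\frac{2}{\pi}\mathrm{vol\,}\Gamma$ iff that leading coefficient is not identically zero. The reduction to a single vertex then follows from $Q$ being a permutation and from the block structure of $\tilde\Sigma$. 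That argument is self-contained within the pseudo-orbit framework; yours instead tries to re-derive the statement from the global non-Weyl criterion via the formula of Theorem~\ref{thm-evsm}.

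The gap in your version is exactly the point you flag as ``the main obstacle,'' and it is not dispatched by a ``symmetric argument.'' Your determinant formula gives $\mathrm{det\,}\tilde\sigma_v(k)\equiv 0$ iff the \emph{numerator} $\mathrm{det\,}[(1+k)\tilde U_v(k)-(1-k)I_n]$ vanishes identically, i.e. iff $\tilde U_v(k)$ has eigenvalue $\frac{1-k}{1+k}$ for all $k$. The cited criterion, however, is an ``either/or'' over the two values $\frac{1+k}{1-k}$ and $\frac{1-k}{1+k}$. If a graph were non-Weyl solely because some $\tilde U_v(k)$ had eigenvalue $\frac{1+k}{1-k}$ for all $k$, your computation would make the \emph{denominator} vanish identically while the numerator does not, so $\mathrm{det\,}\tilde\sigma_v(k)$ would blow up (or $\tilde\sigma_v$ would be ill-defined) rather than vanish; the symmetry you appeal to exchanges $\tilde\sigma_v$ with its inverse and therefore proves the wrong statement. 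To close the argument you must either show that this branch cannot occur (for instance: $\tilde U_v(k)\to U_1$ as $k\to 1$ by (\ref{eq-utilde}), so its eigenvalues stay bounded near $k=1$ and cannot coincide with $\frac{1+k}{1-k}$, which has a pole there), or show that it independently forces $\mathrm{det\,}\tilde\sigma_v\equiv 0$. Neither is supplied. The remaining steps --- the block-diagonal reduction, the fact that $\mathrm{det\,}\tilde\Sigma=\prod_v\mathrm{det\,}\tilde\sigma_v$ vanishes identically iff one rational factor does, and the analytic-continuation remark about where the inverse in Theorem~\ref{thm-evsm} exists --- are sound.
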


\begin{proof}
The term of the determinant in Theorem \ref{thm-rescon}, which has the highest 
multiple of $ik$ in the exponent, is 
$\mathrm{det\,}[Q\tilde \Sigma(k)] \,\mathrm{e}^{2ik\sum_{j=1}^N \ell_j}$ and 
the term with the lowest multiple of $ik$ is 1. Theorem 3.1 in \cite{DEL} shows 
that the number of zeros of this determinant in the circle of radius $R$ is 
asymptotically equal to $\frac{2 }{\pi}\mathrm{vol\,}\Gamma$ iff the coefficient 
by the first term is nonzero. Since multiplying by $Q$ means only rearranging 
the rows, $\mathrm{det\,}Q\tilde \Sigma(k) = 0$ iff $\mathrm{det\,}\tilde \Sigma(k) = 0$. 
Since $\tilde \Sigma (k)$ is similar to the matrix with blocks $\tilde\sigma_v$, 
the second part of the theorem follows. 
\end{proof}

From this theorem and Corollary \ref{cor-stan} Theorem 1.2 from \cite{DP}
follows that states that a graph with standard coupling is non-Weyl iff
there is a vertex connecting the same number of finite and infinite edges.

Now we state a theorem that gives the effective size of an equilateral graph.

\begin{theorem}
Let us assume an equilateral graph (all internal edges have lengths $\ell$).
Then the effective size of this graph is $\frac{\ell}{2}n_{\mathrm{nonzero}}$, 
where $n_{\mathrm{nonzero}}$ is the number of nonzero eigenvalues of the
matrix $Q \tilde \Sigma(k)$.
\end{theorem}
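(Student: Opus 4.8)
The plan is to analyze the resonance condition from Theorem~\ref{thm-rescon}, namely $\mathrm{det\,}(\mathrm{e}^{ikL}Q\tilde\Sigma(k)-I_{2N})=0$, in the equilateral case where $L=\ell\, I_{2N}$, so that $\mathrm{e}^{ikL}=\mathrm{e}^{ik\ell}I_{2N}$ is a scalar multiple of the identity. The resonance condition then becomes $\mathrm{det\,}(\mathrm{e}^{ik\ell}Q\tilde\Sigma(k)-I_{2N})=0$, which factors through the eigenvalues of the single matrix $Q\tilde\Sigma(k)$. Writing $z=\mathrm{e}^{ik\ell}$ and denoting the eigenvalues of $Q\tilde\Sigma(k)$ by $\mu_1(k),\dots,\mu_{2N}(k)$, the condition reads $\prod_{j=1}^{2N}(z\,\mu_j(k)-1)=0$. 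My first step is therefore to reduce the determinant to this product over eigenvalues and to identify which factors can actually contribute to the growth of the resonance-counting function.

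The key observation to exploit is that only the nonzero eigenvalues $\mu_j(k)$ contribute to the exponential growth in $z=\mathrm{e}^{ik\ell}$: a factor $z\mu_j(k)-1$ with $\mu_j\equiv 0$ degenerates to the constant $-1$ and produces no zeros, whereas a factor with $\mu_j(k)\neq 0$ contributes one power of $z$, hence effectively one ``$\mathrm{e}^{ik\ell}$ worth'' of length to the leading term. Thus my second step is to expand the product and collect the highest-order term in $z$: the coefficient of the top power $z^{n_{\mathrm{nonzero}}}$ is $\prod_{\mu_j\neq 0}\mu_j(k)$ times the product of the remaining constant factors, and the lowest-order term is the constant $(-1)^{2N}$. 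I would then write the resonance condition as an exponential polynomial in $\mathrm{e}^{ik\ell}$ whose highest exponent is $n_{\mathrm{nonzero}}\,\ell$ and whose lowest exponent is $0$.

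With the resonance condition cast as an exponential sum of the form $\sum_r c_r(k)\,\mathrm{e}^{ik\ell r}$ with $r$ ranging from $0$ to $n_{\mathrm{nonzero}}$, I would apply the counting result invoked earlier in the paper (Theorem~3.1 in \cite{DEL}, used in the proof of the non-Weyl criterion) which relates the number of zeros in a disc of radius $R$ to the spread between the top and bottom exponents, provided the extreme coefficients do not vanish. Here the span of exponents is $n_{\mathrm{nonzero}}\,\ell-0=n_{\mathrm{nonzero}}\,\ell$, so the counting function satisfies $N(R)=\tfrac{2}{\pi}\cdot\tfrac{\ell}{2}n_{\mathrm{nonzero}}\,R+\mathcal{O}(1)$, giving effective size $W=\tfrac{\ell}{2}n_{\mathrm{nonzero}}$ exactly as claimed. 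The remaining loose ends are that the $\mu_j(k)$ are energy dependent, so $c_r(k)$ are not constants, and that I must guarantee the top and bottom coefficients are not identically zero; the bottom coefficient is manifestly $(-1)^{2N}\neq 0$, and the top coefficient $\prod_{\mu_j\neq 0}\mu_j(k)$ is nonzero precisely by the definition of $n_{\mathrm{nonzero}}$.

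The main obstacle I anticipate is the energy dependence of $\tilde\Sigma(k)$, and hence of its eigenvalues and of $n_{\mathrm{nonzero}}$: strictly speaking the number of nonzero eigenvalues could jump at isolated values of $k$, so I must argue that it is constant for generic (or all but finitely many) $k$ and that these isolated exceptional points do not affect the $\mathcal{O}(1)$-level asymptotics. I would handle this by noting that $\mathrm{det\,}$ and the characteristic polynomial of $Q\tilde\Sigma(k)$ are meromorphic (in fact rational) in $k$, so the rank, and thus $n_{\mathrm{nonzero}}$, is constant off a discrete set; on that discrete set the contribution is absorbed into the remainder term. The rest of the argument is the bookkeeping of matching the leading coefficient $\tfrac{2}{\pi}W$ in \eqref{eq-weyl} with the exponent span $n_{\mathrm{nonzero}}\,\ell$, which is routine once the reduction to the eigenvalue product is in place.
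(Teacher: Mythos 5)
Your proposal is correct and takes essentially the same route as the paper: reduce the resonance condition to the eigenvalue product $\prod_j(\mathrm{e}^{ik\ell}\mu_j(k)-1)=0$ (the paper does this via the Jordan form of $Q\tilde\Sigma(k)$, which commutes with the scalar $\mathrm{e}^{ikL}=\mathrm{e}^{ik\ell}I_{2N}$), read off the extreme exponents $0$ and $n_{\mathrm{nonzero}}\ell$, and invoke Theorem~3.1 of \cite{DEL}. Your extra remarks on the possible $k$-dependence of $n_{\mathrm{nonzero}}$ go slightly beyond what the paper records but do not change the argument.
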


\begin{proof}
We use Theorem \ref{thm-rescon} again. First, we notice that the matrix~$Q \tilde \Sigma (k)$ 
can be in this theorem replaced by its Jordan form $D(k) = V Q \tilde \Sigma (k) V^{-1}$ 
with $V$ unitary. The matrix $L$ (defined in section~\ref{sec-pseudo}) is for
an equilateral graph a multiple of the identity matrix, and therefore $V$ 
commutes with $\mathrm{e}^{ik L}$. Unitary transformation does not change the 
determinant of a matrix, hence we have the resonance condition 
$\mathrm{det\,}(\mathrm{e}^{ik\ell} D(k)- I_{2N}) = 0$. Since the matrix under 
the determinant is upper triangular, the determinant is equal to multiplication 
of its diagonal elements. The term of the determinant, which contains the 
highest multiple of $ik$ in the exponent, is $\mathrm{e}^{i k \ell n_{\mathrm{nonzero}}}$. 
The term with the lowest multiple of $ik$ in the exponent is 1. The claim 
follows from Theorem~3.1 in \cite{DEL}. 
\end{proof}

Clearly, if there are $n_{\mathrm{bal}}$ balanced vertices with standard
coupling, then there is at least $n_{\mathrm{bal}}$ zeros in the eigenvalues
of the matrix $Q\tilde{\Sigma}$, and hence the effective size is bounded by 
$W\leq \mathrm{vol\,}\Gamma -\frac{\ell }{2}n_{\mathrm{bal}}$. The following
corollary of the previous theorem gives a criterion saying when this bound
can be improved.

\begin{corollary}
Let $\Gamma$ be an equilateral graph with standard coupling and with $n_\mathrm{bal}$ 
balanced vertices. Then the effective size $W < \mathrm{vol\,}\Gamma - \frac{\ell}{2}n_\mathrm{bal}$ 
iff $\mathrm{rank\,}(Q \tilde \Sigma Q \tilde \Sigma) < \mathrm{rank(Q \tilde \Sigma)} = 2N - n_\mathrm{bal}$.
\end{corollary}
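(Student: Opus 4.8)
The plan is to convert the inequality between effective sizes into a single statement about the zero eigenvalue of the ($k$-independent) matrix $A:=Q\tilde\Sigma$, and then to recognize the rank condition as the standard criterion distinguishing the algebraic and geometric multiplicities of that eigenvalue. First I would record the arithmetic. For an equilateral graph with $N$ internal edges $\mathrm{vol\,}\Gamma=N\ell$, so the threshold value is $\mathrm{vol\,}\Gamma-\frac{\ell}{2}n_\mathrm{bal}=\frac{\ell}{2}(2N-n_\mathrm{bal})$, while the theorem on the effective size of an equilateral graph gives $W=\frac{\ell}{2}n_\mathrm{nonzero}$. Hence the asserted inequality $W<\mathrm{vol\,}\Gamma-\frac{\ell}{2}n_\mathrm{bal}$ is equivalent to $n_\mathrm{nonzero}<2N-n_\mathrm{bal}$.

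Next I would compute the rank. Since $Q^2=I_{2N}$, the matrix $Q$ is invertible and $\mathrm{rank}(Q\tilde\Sigma)=\mathrm{rank}(\tilde\Sigma)$; because $\tilde\Sigma$ is similar to the block-diagonal matrix with blocks $\tilde\sigma_v$, its rank equals $\sum_v\mathrm{rank}(\tilde\sigma_v)$. For standard coupling Corollary~\ref{cor-stan} gives $\tilde\sigma_v=\frac{2}{n_v+m_v}J_{n_v}-I_{n_v}$, whose spectrum is $\frac{n_v-m_v}{n_v+m_v}$ (simple) together with $-1$ of multiplicity $n_v-1$; thus $\tilde\sigma_v$ has full rank $n_v$ at an unbalanced vertex and rank $n_v-1$, with a one-dimensional kernel, at a balanced one. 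Using $\sum_v n_v=2N$ (each of the $2N$ directed bonds has exactly one head), this yields $\mathrm{rank}(Q\tilde\Sigma)=2N-n_\mathrm{bal}$, which simultaneously identifies $\dim\ker(Q\tilde\Sigma)$, the geometric multiplicity $g_0$ of the eigenvalue $0$ of $A$, as exactly $n_\mathrm{bal}$.

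The crucial observation is that $n_\mathrm{nonzero}$ counts nonzero eigenvalues with \emph{algebraic} multiplicity: in the proof of the effective-size theorem the leading exponent $\mathrm{e}^{ik\ell n_\mathrm{nonzero}}$ comes from the product of the diagonal entries $\lambda_i\mathrm{e}^{ik\ell}-1$ of the Jordan form of $A$, on which $0$ appears $a_0$ times. Hence $n_\mathrm{nonzero}=2N-a_0$, and the inequality of the first paragraph reads $2N-a_0<2N-g_0$, i.e. $a_0>g_0$. It then remains to prove the purely linear-algebraic equivalence $a_0>g_0\iff\mathrm{rank}(A^2)<\mathrm{rank}(A)$. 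I would argue via the kernel chain: $\mathrm{rank}(A^2)<\mathrm{rank}(A)$ holds iff $\ker A$ is a proper subspace of $\ker A^2$, which occurs precisely when $A$ has at least one Jordan block of size $\geq 2$ at the eigenvalue $0$; and, writing $s_1,\dots,s_{g_0}$ for the sizes of the Jordan blocks at $0$, such a block exists iff $a_0=\sum_j s_j>g_0$. Chaining these equivalences with the reductions above closes the argument.

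The main obstacle is conceptual rather than computational: one must keep the algebraic and geometric multiplicities of the eigenvalue $0$ firmly apart, since $n_\mathrm{nonzero}$ (hence $W$) is controlled by the algebraic multiplicity $a_0$ whereas $\mathrm{rank}(Q\tilde\Sigma)$ is controlled by the geometric one $g_0$. Note that the gap $a_0-g_0>0$ is genuinely a property of the \emph{product} $Q\tilde\Sigma$, not of $\tilde\Sigma$ alone, which is diagonalizable blockwise; it is the mixing by $Q$ that can make $0$ defective. Getting the direction of this distinction right, and justifying that the rank drop upon squaring detects exactly the condition $a_0>g_0$, is the heart of the proof, while the spectral computation for $\tilde\sigma_v$ and the volume bookkeeping are routine.
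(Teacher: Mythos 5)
Your proposal is correct and follows essentially the same route as the paper: compute $\mathrm{rank}(Q\tilde\Sigma)=2N-n_\mathrm{bal}$ from the blockwise ranks of the $\tilde\sigma_v$, translate $W<\mathrm{vol\,}\Gamma-\frac{\ell}{2}n_\mathrm{bal}$ via $W=\frac{\ell}{2}n_\mathrm{nonzero}$ into the existence of a nontrivial Jordan block at the eigenvalue $0$ of $Q\tilde\Sigma$, and detect that block by the rank drop under squaring. Your write-up is merely more explicit than the paper's (e.g.\ in separating the algebraic multiplicity $a_0$ from the geometric multiplicity $g_0$), but the argument is the same.
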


\begin{proof}
For the vertex of degree $d$ the rank of $\tilde \sigma_v$ is either $d-1$ 
for a balanced vertex or $d$ otherwise. Hence 
$\mathrm{rank(Q \tilde \Sigma)} = \mathrm{rank(\tilde \Sigma)} = 2N - n_\mathrm{bal}$. 
The effective size is smaller than $\mathrm{vol\,}\Gamma - \frac{\ell}{2}n_\mathrm{bal}$ 
iff there is at least one 1 just above the diagonal in the block with zeros 
on the diagonal in Jordan form of $Q\tilde \Sigma$. Each Jordan block consisting 
of $n$ zeros on the diagonal and $n-1$ ones above the diagonal has rank $n-1$; 
its square has rank $n-2$. The blocks with eigenvalues other than zero have 
(as well as their squares) rank maximal. Hence the rank of 
$Q \tilde \Sigma Q \tilde \Sigma$ is smaller than the rank of $Q \tilde \Sigma$.
\end{proof}

\section{Deleting edges of the oriented graph}

In this section we develop a method with the help of which the resonance
condition for non-Weyl graphs can be constructed more easily. We restrict
to equilateral graphs with standard coupling. For each balanced vertex we
delete one directed edge of the graph $\Gamma _{2}$, which ends in this
vertex, and replace it by one or several ``ghost
edges''. These edges allow the pseudo orbits to hop from a
vertex to a directed edge that is not connected with the vertex. The
``ghost edges'' do not contribute to the
resonance condition with the coefficient $\mathrm{e}^{ik\ell }$, they only
change the product of scattering amplitudes. The idea referring to this
deleting is a unitary transformation of the matrix $Q\tilde{\Sigma}$, after
which one of its columns consists of all zeros. We use this method in the
following section to prove the main theorems. We explain this method in the
following construction and summarize it in the theorem; for better
understanding the examples in section~\ref{ex1} or in the note \cite{Li}
might be helpful.

\begin{figure}[ht]
\begin{minipage}[b]{0.45\linewidth}
\centering
\includegraphics[width=\textwidth]{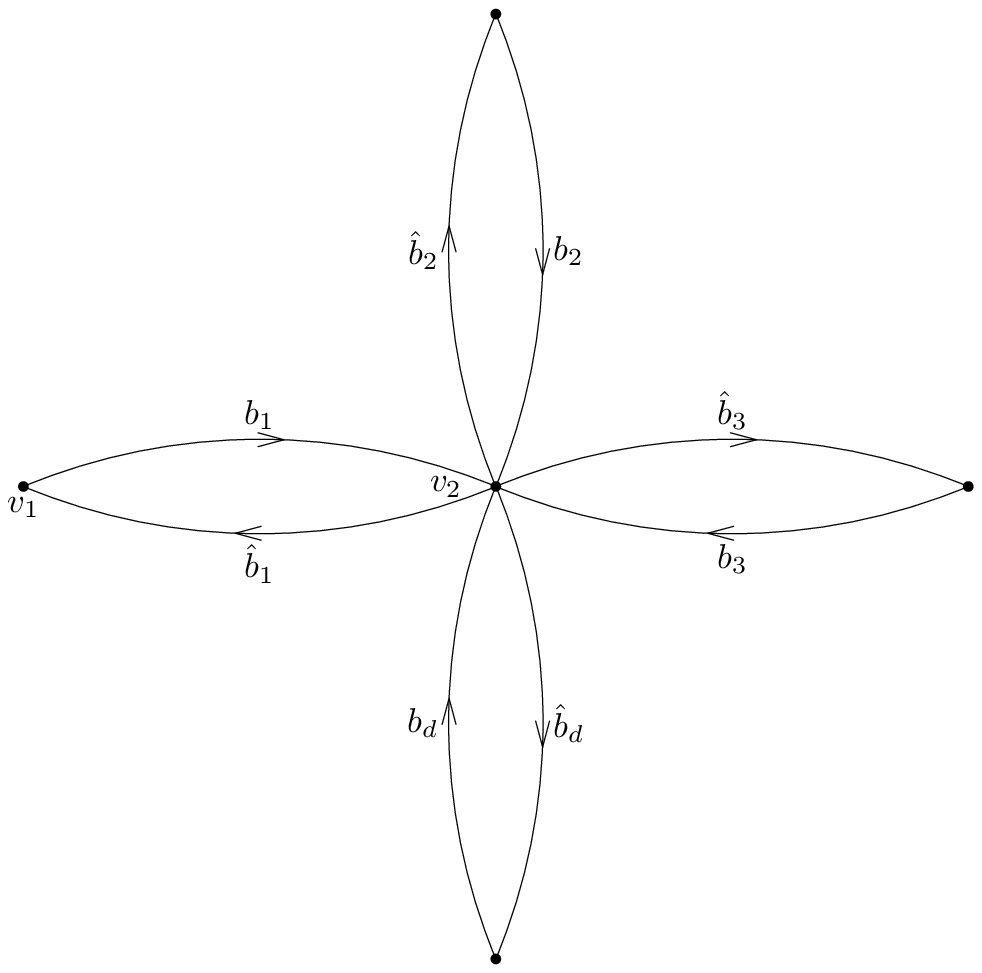}
\caption{Part of the graph $\Gamma_2$. To vertex $v_1$ and to undenoted vertices 
which neighbor $v_2$ other bonds can be attached.}
\label{fig1}
\end{minipage}
\hspace{0.5cm} 
\begin{minipage}[b]{0.45\linewidth}
\centering
\includegraphics[width=\textwidth]{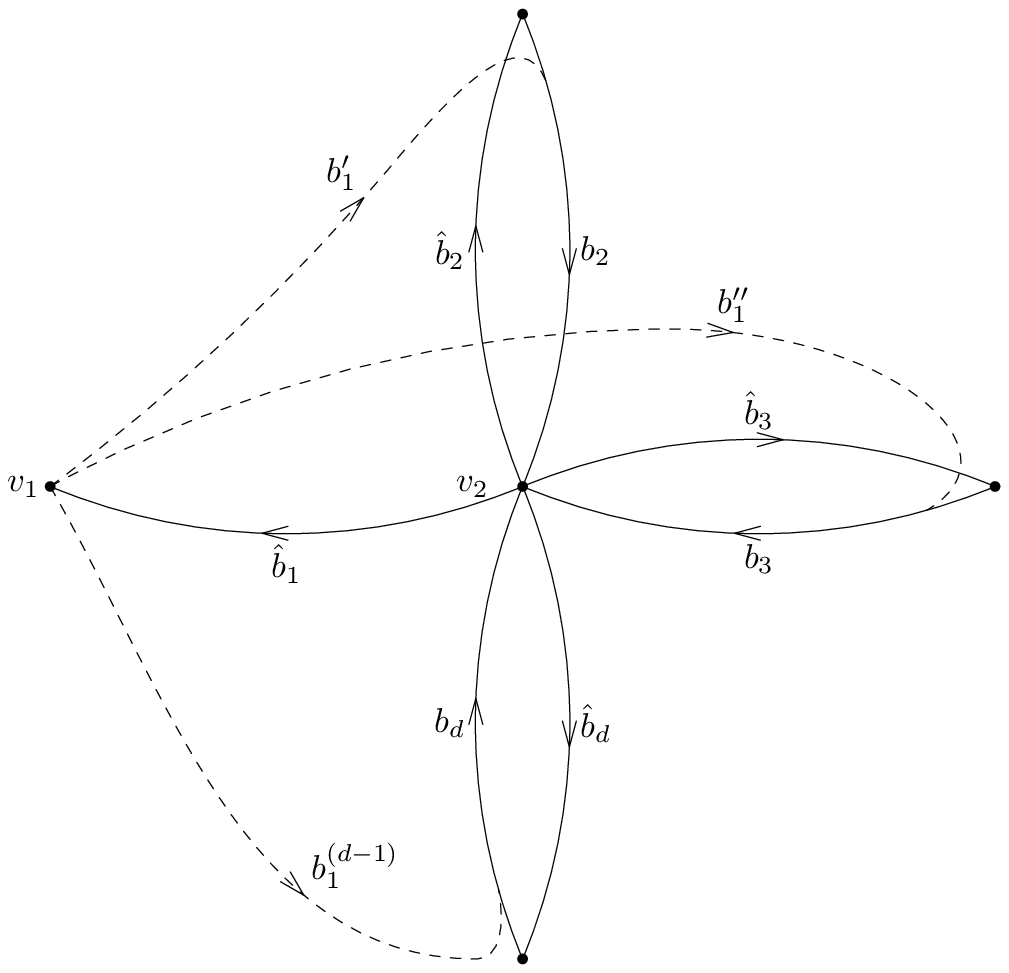}
\caption{Part of the graph $\Gamma_2$ after deleting the bond $b_1$ and 
introducing ``ghost edges''.}
\label{fig2}
\end{minipage}
\end{figure}

Let us assume an equilateral graph $\Gamma $ (all internal edges of 
length~$\ell $) with standard coupling. Let us assume that there is no edge that
starts and ends in one vertex, and that no two vertices are connected by two
or more edges. We use the following construction.

\begin{itemize}
\item let the vertex $v_2$ be balanced in the graph $\Gamma$ and let the
directed bonds $b_1, \dots, b_d$ in the graph $\Gamma_2$ end in the 
vertex~$v_2$ (part of the corresponding directed graph $\Gamma_2$ is shown in 
Figure~\ref{fig1})

\item we delete the directed edge $b_1$ that starts at the vertex $v_1$ and
ends at $v_2$

\item there are new directed ``ghost edges'' $b_1', b_1'', \dots, b^{(d-1)}$ 
introduced that start in the vertex $v_1$ and are connected
to the edges $b_2, b_3, \dots b_d$, respectively (see Figure~\ref{fig2})

\item we use irreducible pseudo orbits to obtain the resonance condition
from the new graph according to Theorem~\ref{thm-rescon2} and obeying the
following rules

\item if the ``ghost edge'' $b_1'$ is contained in the irreducible
pseudo orbit $\bar \gamma$, the length of this ``ghost edge'' does not
contribute to $\ell_{\bar \gamma}$

\item let e.g. the ``ghost edge'' $b_1'$ be included in $\bar
\gamma $; then the scattering amplitude from the bond $b$ ending in $v_1$ to
the bond $b_2$ is the scattering amplitude in the former graph $\Gamma_2$
between $b$ and $b_1$ taken with the opposite sign

\item in the irreducible pseudo orbit each ``ghost edge'' can be used only
once

\item the above procedure can be repeated; for each balanced vertex we can
delete one directed edge that ends in this vertex and replace it by the
``ghost edges''.
\end{itemize}

\begin{theorem}
The described construction under the assumptions above does not change the
resonance condition.
\end{theorem}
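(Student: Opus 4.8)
The plan is to show that the deletion-plus-ghost-edge construction corresponds to a unitary change of basis applied to the matrix $Q\tilde\Sigma$ that makes one column of the transformed matrix vanish, together with a relabelling of directed edges, so that the determinant in Theorem~\ref{thm-rescon} (and hence the pseudo-orbit sum in Theorem~\ref{thm-rescon2}) is left invariant. First I would make precise the algebraic fact underlying the construction. At a balanced vertex $v_2$ of degree $d$ with standard coupling, Corollary~\ref{cor-stan} gives $\tilde\sigma_{v_2}(k)=\frac{1}{d}J_d-I_d$, which has rank $d-1$ and a one-dimensional kernel. I would exhibit an explicit unitary $W_{v_2}$ acting only on the $d$-dimensional block of incoming amplitudes at $v_2$ whose first row is the normalized all-ones vector; applying $W_{v_2}$ on the appropriate side of $\tilde\sigma_{v_2}$ produces a matrix with one identically-zero column (the column dual to the kernel direction). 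Collecting these vertex transformations into a single unitary $W$ acting on the full $2N$-dimensional space, I would set $D=W(Q\tilde\Sigma)W^{-1}$ and observe that for an equilateral graph $\mathrm{e}^{ikL}=\mathrm{e}^{ik\ell}I_{2N}$ commutes with $W$, so that
\begin{equation*}
\mathrm{det\,}(\mathrm{e}^{ikL}Q\tilde\Sigma-I_{2N})=\mathrm{det\,}(\mathrm{e}^{ikL}D-I_{2N})\,,
\end{equation*}
establishing invariance of the resonance condition at the level of the determinant.

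Next I would give the combinatorial dictionary that translates this algebraic statement into the pictorial recipe of the construction. The key point is to identify the zero column of $D$ with the deleted directed bond $b_1$: because that column vanishes, no irreducible pseudo orbit may enter the vertex $v_2$ along $b_1$, which is exactly what ``deleting $b_1$'' means on $\Gamma_2$. The remaining nonzero off-diagonal entries in the transformed block are precisely the scattering amplitudes carried by the ghost edges $b_1',\dots,b_1^{(d-1)}$; I would read off from the explicit form of $W_{v_2}$ that these new entries reproduce the stated rules, namely that the amplitude from a bond $b$ into $b_j$ via the ghost edge equals the former amplitude from $b$ into $b_1$ taken with the opposite sign, and that the ghost edges carry no factor $\mathrm{e}^{ik\ell}$ (they contribute to $A_{\bar\gamma}$ but not to $\ell_{\bar\gamma}$). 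This last fact is the reason the effective size decreases by $\ell/2$ per balanced vertex while the determinant is preserved: the ghost edges shift amplitudes without adding length.

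The main obstacle, and the step I would treat most carefully, is the bookkeeping that guarantees the pseudo-orbit sum on the modified graph is term-by-term equal to the determinant expansion of $\mathrm{det\,}(\mathrm{e}^{ikL}D-I_{2N})$, rather than merely equal after cancellation. Because the Leibniz expansion of this determinant is, by Theorem~1 in \cite{BHJ}, an alternating sum over irreducible pseudo orbits weighted by $(-1)^{m_{\bar\gamma}}$, I would verify that each cycle structure appearing after the transformation corresponds to an admissible irreducible pseudo orbit on the ghost-edge graph with the correct sign $(-1)^{m_{\bar\gamma}}$, and conversely that no spurious orbit (e.g. one reusing a ghost edge, which the rules forbid) is created. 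In particular I must confirm that the ``each ghost edge used at most once'' rule is exactly the requirement that $D$ contributes each matrix entry at most once to any single term of the determinant, so that admissibility on the graph side and the structure of the Leibniz formula coincide. Once this correspondence is checked for a single deleted bond, the final clause follows by iterating the argument: since distinct balanced vertices act on disjoint index blocks, the unitaries $W_{v_2}$ for different vertices commute and can be composed, so the construction may be repeated independently at every balanced vertex without altering the resonance condition.
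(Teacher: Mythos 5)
Your outline follows the same strategy as the paper (a similarity transformation of $Q\tilde\Sigma$ that annihilates the column of the deleted bond, followed by a combinatorial dictionary identifying the new matrix entries with ghost edges), but the transformation you propose is the wrong one, and this breaks the key step. You take the word ``unitary'' literally and choose $W_{v_2}$ to be a genuine unitary whose first row is the normalized all-ones vector. Right-multiplication by $W_{v_2}^{-1}$ does zero out one column, since the all-ones vector spans the kernel of $\tilde\sigma_{v_2}=\frac{1}{d}J_d-I_d$; but the compensating left-multiplication by $W_{v_2}$ then mixes the rows indexed by $b_1,\dots,b_d$ with coefficients involving $1/\sqrt{d}$ and the remaining orthonormal rows of $W_{v_2}$. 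The resulting entries are linear combinations of scattering amplitudes at \emph{several different} vertices (the rows $b_1,\dots,b_d$ carry the scattering data of the vertices where those bonds start, not of $v_2$), and they are emphatically not ``the former amplitude from $b$ into $b_1$ taken with the opposite sign.'' So the read-off you defer to (``I would read off from the explicit form of $W_{v_2}$ that these new entries reproduce the stated rules'') would fail. The paper instead conjugates by the non-unitary transvection $V_1$ with $(V_1)_{ii}=1$, $(V_1)_{b_ib_1}=1$ for $i=2,\dots,d$, and zeros elsewhere; determinant invariance needs only invertibility, and it is precisely this shear that adds the other columns into the $b_1$-column (killing it, because the columns of $\tilde\sigma_{v_2}$ sum to zero) while its inverse adds exactly $-S_{b_1 b}$ into the $(b_j,b)$-entries, which is the ghost-edge rule verbatim.

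A second, related omission: you never use the hypothesis that the graph has no loops and no multiple edges, but the paper's proof needs it at exactly this point. It guarantees that $b_1$ is the only bond from $v_1$ to $v_2$, hence that the entries of $Q\tilde\Sigma V_1$ in the rows $b_2,\dots,b_d$ and the columns of bonds ending at $v_1$ are zero \emph{before} the left-multiplication; only then does $V_1^{-1}$ merely \emph{insert} the negated amplitudes rather than corrupt pre-existing ones. Your final remark that the transformations for distinct balanced vertices ``act on disjoint index blocks'' and therefore commute is also too quick: successive deletions act on the full matrix and compound (ghost edges created by one deletion acquire further images under the next, as in the paper's square example), and the paper has to note separately that one never deletes a bond into which a ghost edge already points. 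The determinant-invariance and Leibniz-to-pseudo-orbit bookkeeping parts of your plan are sound, but the central computation must be redone with the transvection rather than a unitary.
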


\begin{proof}
Unitary transformation of the matrix $Q \tilde \Sigma$ does not change the 
determinant in Theorem~\ref{thm-rescon}, since for an equilateral graph 
$\mathrm{e}^{ik L}$ is replaced by $\mathrm{e}^{ik \ell} I_{2N}$ and 
$$
  \mathrm{det\,}(\mathrm{e}^{ik \ell}V_1^{-1} Q \tilde\Sigma V_1- I_{2N}) = 
  \mathrm{det\,}[V_1^{-1}(\mathrm{e}^{ik \ell} Q \tilde\Sigma - I_{2N})V_1] = 	   
  \mathrm{det\,}(\mathrm{e}^{ik \ell} Q \tilde\Sigma - I_{2N})\,.
$$
We want to delete edge $b_1$. Let $v_2$ be a balanced vertex and let the other 
edges ending in $v_2$ be $b_2, b_3, \dots, b_d$. We choose as $V_1$ the 
$2N\times 2N$ matrix with the following entries
$$
  (V_1)_{i i} =  1\quad \mathrm{for\ }i = 1,\dots, 2N;\quad (V_1)_{b_i b_1} =  
  1\quad \mathrm{for\ }i = 2,\dots, d;\quad (V_1)_{ij} =  0\quad \mathrm{otherwise},   
$$
where $(V_1)_{b_i b_1}$ is the entry with the row corresponding to the bond~$b_i$ 
and column corresponding to $b_1$. One can easily show that the entries of $V_1^{-1}$ are
\begin{eqnarray*}
  (V_1^{-1})_{i i} =  1\quad \mathrm{for\ }i = 1,\dots, 2N;\quad (V_1^{-1})_{b_i b_1} = 
   -1\quad \mathrm{for\ }i = 2,\dots, d;\\
   (V_1^{-1})_{ij} =  0\quad \mathrm{otherwise}.  
\end{eqnarray*}

The matrix $\tilde \sigma_{v_2}= \frac{1}{d}J_{d}-I_d$ has linearly dependent 
columns, hence, if one multiplies it from the right by a matrix with all diagonal 
entries equal to 1, and all the entries in one of its columns equal to 1, other 
entries 0, one obtains a matrix which has one column with all 0s and the other 
columns the same as $\tilde \sigma_{v_2}$. The matrix $Q \tilde \Sigma$ has entries 
of $\tilde \sigma_{v_2}$ in the columns corresponding to bonds ending at $v_2$ 
and rows corresponding to bonds starting from $v_2$. By the same reasoning as 
for $\tilde \sigma_{v_2}$ the column of $Q\tilde \Sigma V_1$ corresponding to 
$b_1$ has all entries equal to 0 and other columns are the same as the 
corresponding columns of $Q\tilde \Sigma$. 

Now it remains to show how multiplying from the left by $V_1^{-1}$ changes the 
matrix. Since nondiagonal entries are only in the rows of $V_1^{-1}$ corresponding 
to bonds $b_2, b_3, \dots b_d$, multiplying by $V_1^{-1}$ changes only these rows. 
Since nondiagonal entries are only in the column corresponding to $b_1$, the 
change can happen only in columns in which there is nontrivial entry in the row 
corresponding to $b_1$. These columns correspond to bonds which end in the 
vertex~$v_1$. We have to multiply these columns by rows of $V_1^{-1}$ which 
have 1 in the $b_j$-th position, $j = 2, 3, \dots, d$ and $-1$ in the $b_1$-th 
position. Since no two vertices are connected by two or more edges, the only 
bond starting at $v_1$ and ending at $v_2$ is $b_1$ and the entries of 
$Q \tilde \Sigma V_1$ in the column corresponding to the edges ending at $v_1$ 
and in the row corresponding to the edges $b_2, b_3, \dots b_d$ are zero (the 
edges in the column cannot be followed in an orbit by edges in the row). Hence,
1 in the above row of $V_1^{-1}$ is multiplied by 0 and $-1$ is multiplied 
by the scattering amplitude between the bonds ending in $v_1$ and $b_1$. 
Therefore, the only change is that there is this scattering amplitude taken 
with the opposite sign in the columns corresponding to the bonds which end 
in $v_1$, and in the row corresponding to bonds $b_2, b_3, \dots, b_d$. These 
entries are represented by the ``ghost edges''. 

It is clear now that one has to take the entry of $I_{2N}$ in the column 
corresponding to $b_1$ in the determinant in Theorem~\ref{thm-rescon} with 
$Q\tilde\Sigma$ replaced by $V_1^{-1} Q \tilde \Sigma V_1$, therefore this edge 
effectively does not exist. The ``ghost edge'' does not contribute to 
$\ell_{\bar \gamma}$, it only says which bonds are connected in the pseudo orbit. 
Similar arguments can be used for other balanced vertices; for each of them we 
delete one edge which ends in it. Note that this method does not delete edges 
to which a ``ghost edge'' leads.
\end{proof}

\section{Main results}

In this section we give two main theorems on bounds on the effective size
for equilateral graphs with standard coupling, and a theorem that gives the
positions of the resonances.

\begin{theorem}
\label{thm-main1} Let us assume an equilateral graph with $N$ internal edges
of lengths $\ell $, with standard coupling, $n_{\mathrm{bal}}$ balanced
vertices, and $n_{\mathrm{nonneig}}$ balanced vertices that do not neighbor
any other balanced vertex. Then the effective size is bounded by $W\leq
N\ell -\frac{\ell }{2}n_{\mathrm{bal}}-\frac{\ell }{2}n_{\mathrm{nonneig}}$.
\end{theorem}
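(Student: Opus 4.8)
The plan is to reduce the statement to a purely combinatorial count on the oriented graph $\Gamma_2$. By the effective-size theorem for equilateral graphs proved above, $W=\frac{\ell}{2}\,n_{\mathrm{nonzero}}$, where $n_{\mathrm{nonzero}}$ is the number of nonzero eigenvalues of $Q\tilde\Sigma$; so it suffices to show $n_{\mathrm{nonzero}}\le 2N-n_{\mathrm{bal}}-n_{\mathrm{nonneig}}$. I would first record the standard fact that the number of nonzero eigenvalues of a matrix is bounded by the largest number of ``vertices'' that can be covered by a collection of disjoint directed cycles in its associated digraph, here the digraph whose nodes are the $2N$ bonds and in which an arc runs from a bond $b$ to a bond $b'$ exactly when $(Q\tilde\Sigma)_{b'b}\neq 0$ (such disjoint cycles are precisely bond-disjoint periodic orbits, cf. the irreducible pseudo orbits of Theorem~\ref{thm-rescon2}). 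Indeed, the coefficient of $x^{2N-j}$ in the characteristic polynomial of $Q\tilde\Sigma$ is, up to sign, a sum over size-$j$ cycle covers, so if no cycle cover of size $j$ exists combinatorially then $0$ is a root of multiplicity at least $2N-j$. Thus $n_{\mathrm{nonzero}}$ is at most the maximal number of bonds coverable by bond-disjoint periodic orbits, and all weights/cancellations may be ignored for this upper bound.

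Next I would apply the edge-deleting construction of the previous section at every balanced vertex simultaneously, choosing the deleted bonds distinct so that the successive transformations do not interfere. This replaces $Q\tilde\Sigma$ by the similar matrix $\hat M=V^{-1}Q\tilde\Sigma V$ whose column corresponding to each deleted bond is identically zero (this is exactly where the rank deficiency of $\tilde\sigma_v=\frac{1}{d}J_d-I_d$ from Corollary~\ref{cor-stan} is used); since similarity preserves eigenvalues, $n_{\mathrm{nonzero}}$ is unchanged. A bond with a zero column has no successor in the digraph and hence lies in no cycle, so the $n_{\mathrm{bal}}$ deleted bonds are never covered. This already yields the elementary estimate $n_{\mathrm{nonzero}}\le 2N-n_{\mathrm{bal}}$, i.e. $W\le N\ell-\frac{\ell}{2}n_{\mathrm{bal}}$, and it remains to find one further permanently uncovered bond for each non-neighbouring balanced vertex.

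Fix a balanced vertex $v$ with $d$ internal edges that neighbours no other balanced vertex; let $b_1,\dots,b_d$ be the bonds entering $v$ with $b_1$ the deleted one, and let $\hat b_1,\dots,\hat b_d$ be the bonds leaving $v$. The heart of the argument is the local claim that in the digraph of $\hat M$ the only possible predecessors of each outgoing bond $\hat b_i$ are the surviving incoming bonds $b_2,\dots,b_d$. This is where the non-neighbouring hypothesis enters: since every neighbour of $v$ is unbalanced, no edge incident to $v$ is deleted at its far endpoint, and the ghost edges created when $b_1$ was removed attach to the rows of the incoming bonds $b_2,\dots,b_d$, not to the outgoing bonds, so they create no new predecessor of any $\hat b_i$. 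Granting this, in any cycle cover each covered outgoing bond has a distinct predecessor among $b_2,\dots,b_d$ (distinct because each bond has a unique successor), whence at most $d-1$ of the $d$ outgoing bonds can be covered and at least one $\hat b_i$ is left uncovered. Since outgoing bonds of distinct vertices are distinct and, by the non-neighbouring property, never coincide with a deleted (incoming) bond of any balanced vertex, these extra uncovered bonds are pairwise different and different from the $n_{\mathrm{bal}}$ deleted bonds. Summing over the $n_{\mathrm{nonneig}}$ non-neighbouring balanced vertices produces at least $n_{\mathrm{bal}}+n_{\mathrm{nonneig}}$ permanently uncovered bonds, so $n_{\mathrm{nonzero}}\le 2N-n_{\mathrm{bal}}-n_{\mathrm{nonneig}}$ and the bound on $W$ follows.

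I expect the main obstacle to be the careful bond bookkeeping behind the local claim, namely verifying that the ghost-edge entries introduced by the deletion at $v$ occupy exactly the rows of $b_2,\dots,b_d$ and hence cannot act as predecessors of the outgoing bonds $\hat b_i$, so that the pigeonhole count is not spoiled by the ghost edges. The non-neighbouring assumption is precisely what keeps this bookkeeping clean: if two balanced vertices were adjacent, a deletion at one could remove an outgoing bond of the other, the ghost edges of the two constructions could land on a shared bond, and the extra uncovered bonds could coincide, so the reduction would cease to be additive. This is why the hypothesis cannot be dropped, and it is reflected in the two separate terms $\frac{\ell}{2}n_{\mathrm{bal}}$ and $\frac{\ell}{2}n_{\mathrm{nonneig}}$ of the bound.
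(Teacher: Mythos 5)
Your proposal is correct and follows essentially the same route as the paper: delete one incoming bond per balanced vertex via the ghost-edge construction, then observe that a non-neighbouring balanced vertex of internal degree $d$ retains only $d-1$ incoming bonds and acquires no ghost edges or deletions on its outgoing bonds, so by pigeonhole one of its $d$ outgoing bonds can never be used. Your reformulation of the paper's ``longest irreducible pseudo orbit'' argument as a cycle-cover bound on the number of nonzero eigenvalues via the characteristic polynomial is just a more explicit rendering of the same count.
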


\begin{proof}
Clearly, for each balanced vertex, we can delete one directed edge of the 
graph~$\Gamma_2$, the size of the graph is reduced by $\frac{\ell}{2}n_{\mathrm{bal}}$. 
In the balanced vertex of the degree $d$ that does not neighbor any other 
balanced vertex we have $d-1$ incoming directed bonds and $d$ outgoing directed 
bonds. No outgoing bond is deleted and no ``ghost edge'' ends in the outgoing 
edge, because there is no balanced vertex which neighbors the given vertex. 
Hence, we cannot use one of the outgoing directed edges in the irreducible 
pseudo orbit (there is no way to get this vertex for the $d$-th time). The 
longest irreducible pseudo orbit does not include $n_{\mathrm{nonneig}}$ bonds 
and the effective size of the graph must be reduced by $\frac{\ell}{2}n_{\mathrm{nonneig}}$.
\end{proof}

\begin{theorem}
\label{thm-square} Let us assume an equilateral graph ($N$ internal edges of
the lengths $\ell $) with standard coupling. Let there be a square of
balanced vertices $v_{1}$, $v_{2}$, $v_{3}$ and $v_{4}$ without diagonals,
i.e. $v_{1}$ neighbors $v_{2}$, $v_{2}$ neighbors $v_{3}$, $v_{3}$ neighbors 
$v_{4}$, $v_{4}$ neighbors $v_{1}$, $v_{1}$ does not neighbor $v_{3}$ and 
$v_{2}$ does not neighbor $v_{4}$. Then the effective size is bounded by 
$W\leq (N-3)\ell $.
\end{theorem}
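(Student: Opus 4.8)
The plan is to argue entirely at the level of the spectrum of $S:=Q\tilde\Sigma$, in the spirit of the rank corollary of Section~5 rather than through pseudo-orbit bookkeeping. The point is that the edge-deletion method of Section~6 only removes one bond per balanced vertex and hence only reproduces the weaker bound $W\le(N-2)\ell$ for the four square vertices; because the four-cycle produces genuine Jordan blocks at the eigenvalue $0$ rather than merely extra zero columns, the final reduction by one more $\ell$ cannot be obtained by deleting further bonds and must instead come from a drop of rank under squaring. Concretely, by the effective-size theorem of Section~5 we have $W=\frac{\ell}{2}\,n_{\mathrm{nonzero}}$, where $n_{\mathrm{nonzero}}=2N-\mu_0$ and $\mu_0$ is the algebraic multiplicity of the eigenvalue $0$ of $S$. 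Since $\mu_0\ge\dim\ker S^2$ we get $n_{\mathrm{nonzero}}\le\mathrm{rank}\,S^2$, and the standard identity $\mathrm{rank}\,S-\mathrm{rank}\,S^2=\dim(\ker S\cap\mathrm{im}\,S)$ together with $\mathrm{rank}\,S=\mathrm{rank}\,\tilde\Sigma=2N-n_{\mathrm{bal}}$ (as $\dim\ker\tilde\sigma_v=1$ at a balanced vertex and $0$ otherwise) reduces the whole theorem to the single inequality $n_{\mathrm{bal}}+\dim(\ker S\cap\mathrm{im}\,S)\ge 6$.

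The first step is to make $\ker S$ and $\mathrm{im}\,S$ explicit. Because $Q$ is invertible, $\ker S=\ker\tilde\Sigma$, which is spanned by vectors $\phi_v$, one per balanced vertex $v$, where $\phi_v$ is the indicator of the bonds ending at $v$ (this is the kernel vector of the block $\tilde\sigma_v=\frac1d J_d-I_d$). Since $\tilde\Sigma$ is similar to the block-diagonal matrix by a mere reordering of the bond coordinates, $\mathrm{im}\,\tilde\Sigma$ is the set of vectors whose entries sum to zero over the bonds ending at each balanced vertex, and $\mathrm{im}\,S=Q\,\mathrm{im}\,\tilde\Sigma$. Consequently a combination $\sum_i c_i\phi_{v_i}$ of the four square vertices lies in $\ker S\cap\mathrm{im}\,S$ exactly when its $Q$-image $\sum_i c_i\psi_{v_i}$ (with $\psi_v$ the indicator of the bonds starting at $v$) is zero-sum into every balanced vertex $u$. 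Because the graph has no loops and no multiple edges, the sum of $\psi_{v_i}$ over the bonds ending at $u$ counts the edges between $v_i$ and $u$, so the condition reads $\sum_i c_i\,a_{u i}=0$ for every balanced $u$, where $a_{ui}=1$ if $u$ is adjacent to $v_i$ and $0$ otherwise.

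The second step extracts the two extra dimensions from the cycle. Letting $u$ range over the square vertices and using that the square is chordless ($v_1\sim v_2\sim v_3\sim v_4\sim v_1$ with $v_1\not\sim v_3$, $v_2\not\sim v_4$), these conditions collapse to the two independent equations $c_1+c_3=0$ and $c_2+c_4=0$, whose solution space is two-dimensional. The map $(c_i)\mapsto\sum_i c_i\phi_{v_i}$ is injective (the $\phi_{v_i}$ have disjoint supports), so in the absence of any further balanced vertices touching the square we obtain $\dim(\ker S\cap\mathrm{im}\,S)\ge 2$ with $n_{\mathrm{bal}}=4$, which is already the required inequality.

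The main obstacle is the case where additional balanced vertices are adjacent to the square, since each such vertex $u$ contributes a further condition $\sum_i c_i\,a_{ui}=0$ that can cut down the solution space. I expect to resolve this by a conservation argument: if $a$ balanced vertices outside the square are adjacent to it, they impose at most $a$ independent extra conditions, so the solution space still has dimension at least $2-a$, while simultaneously $n_{\mathrm{bal}}\ge 4+a$; hence $n_{\mathrm{bal}}+\dim(\ker S\cap\mathrm{im}\,S)\ge(4+a)+(2-a)=6$ in every configuration. This closes the proof, and I would end with the remark that the two extra kernel–image vectors are the abstract counterpart of the two nilpotent Fourier modes of the cycle that appear explicitly when all four vertices have degree two, where $S=\frac12\left(\begin{array}{cc}P & -I\\ -I & P^{\mathrm T}\end{array}\right)$ with $P$ the $4\times4$ cyclic shift and the modes $\omega=\pm i$ give singular traceless $2\times2$ blocks.
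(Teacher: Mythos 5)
Your argument is correct, and it takes a genuinely different route from the paper. The paper proves this bound dynamically: it deletes the four bonds $\hat 1,\hat 2,\hat 3,\hat 4$ by the ghost-edge construction of Section~6 (reduction by $2\ell$), and then gains the last $\ell$ by showing that the two highest coefficients of the reduced resonance condition vanish through explicit pseudo-orbit cancellations --- the pairs $(1234)$ versus $(12)(34)$ and $(1432)$ versus $(14)(32)$ differ only in $m_{\bar\gamma}$ and hence cancel, and a separate connectivity argument handles the second-highest term. You instead work entirely spectrally: writing $W=\frac{\ell}{2}n_{\mathrm{nonzero}}$, bounding $n_{\mathrm{nonzero}}\le\mathrm{rank}\,S^2=\mathrm{rank}\,S-\dim(\ker S\cap\mathrm{im}\,S)$ with $\mathrm{rank}\,S=2N-n_{\mathrm{bal}}$, and then exhibiting the two kernel--image vectors $c_1(\phi_{v_1}-\phi_{v_3})+c_2(\phi_{v_2}-\phi_{v_4})$ produced by the chordless four-cycle; the case analysis over the number $a$ of additional balanced vertices adjacent to the square ($n_{\mathrm{bal}}\ge 4+a$ against at most $a$ extra linear conditions) closes the inequality $n_{\mathrm{bal}}+\dim(\ker S\cap\mathrm{im}\,S)\ge 6$ in all configurations. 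This is in effect a quantitative sharpening of the Corollary in Section~5 (which only records \emph{whether} $\mathrm{rank}\,S^2<\mathrm{rank}\,S$), and it buys a cleaner proof that avoids the delicate bookkeeping of which pseudo orbits survive deletion --- in particular the second-highest-term argument, which in the paper needs a separate discussion of paths through $\Gamma_2'$. What the paper's approach buys in exchange is an explicit identification of the cancelling orbits and a reduced directed graph one can compute with; your method certifies the degree drop of the characteristic polynomial without producing the reduced resonance condition itself. Two small points worth making explicit in a final write-up: the hypotheses of no loops and no multiple edges (assumed in Section~6 and used implicitly in the paper's proof as well) are needed for your identification of the membership conditions $\sum_i c_i a_{uv_i}=0$, and the image description $\mathrm{im}\,\tilde\sigma_v=\mathbf{1}^{\perp}$ for balanced $v$ rests on the symmetry of $\frac{1}{d}J_d-I_d$, which should be stated. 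Your closing degree-two computation with $S=\frac12\bigl(\begin{smallmatrix}P&-I\\-I&P^{\mathrm T}\end{smallmatrix}\bigr)$ is a correct consistency check: the Fourier modes $\omega=\pm i$ give the two size-two Jordan blocks at $0$, matching $\dim(\ker S\cap\mathrm{im}\,S)=2$.
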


\begin{figure}[tbp]
\begin{minipage}[b]{0.45\linewidth}
\centering
\includegraphics[width=\textwidth]{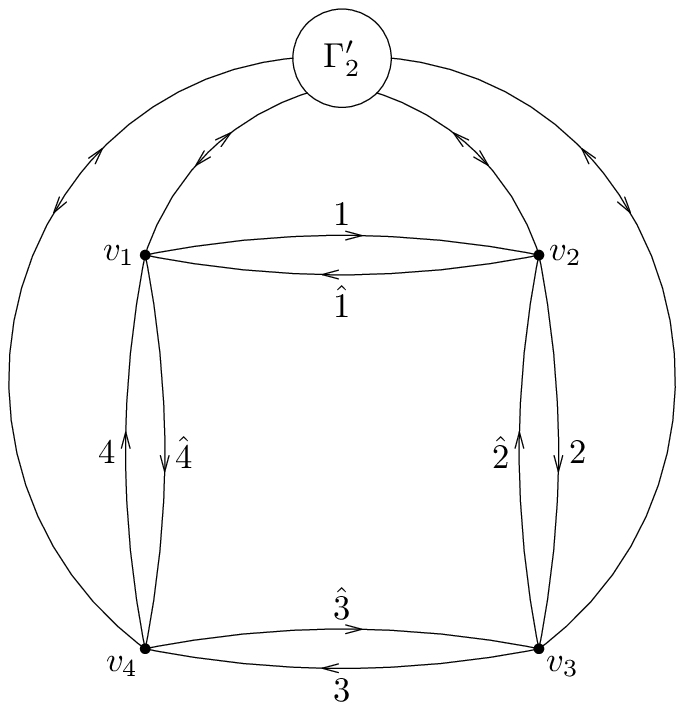}
\caption{Figure to Theorem \ref{thm-square}. The bonds between $\Gamma_2'$ and 
vertices $v_1, \dots, v_4$ represent possible directed edges between this 
subgraph and vertices in both directions.}
\label{fig3}
\end{minipage}
\hspace{0.5cm} 
\begin{minipage}[b]{0.45\linewidth}
\centering
\includegraphics[width=\textwidth]{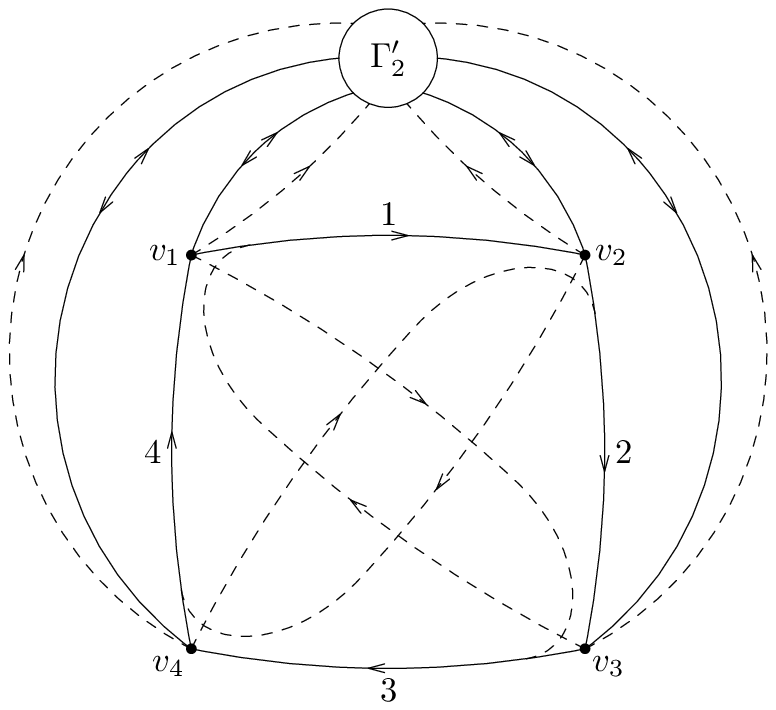}
\caption{Figure to Theorem \ref{thm-square}. The bonds and ``ghost edges'' 
between $\Gamma_2'$ and vertices $v_1, \dots, v_4$ represent possible directed 
edges between this subgraph and vertices in both directions and ``ghost edges'' 
from the vertices $v_1,\dots, v_4$ to the subgraph $\Gamma_2'$.}
\label{fig4}
\end{minipage}
\end{figure}

\begin{figure}[tbp]
\centering
\includegraphics[width=0.45\textwidth]{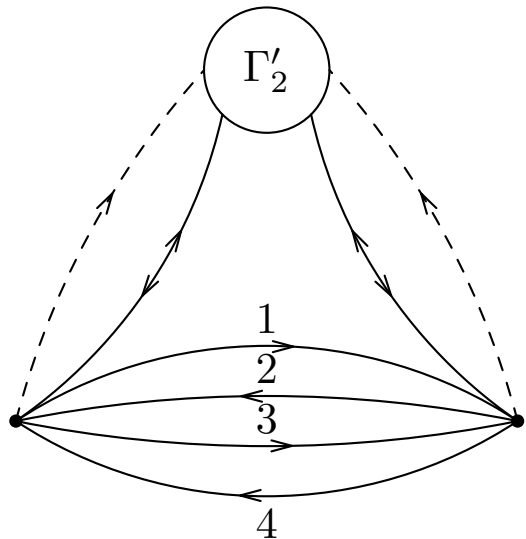}
\caption{Figure to Theorem \protect\ref{thm-square}. The effective graph
with ``ghost edges'' taken into account.}
\label{fig5}
\end{figure}

\begin{proof}
Let us denote the bond from $v_1$ to $v_2$ by 1, the bond from $v_2$ to $v_3$ 
by 2, the bond from $v_3$ to $v_4$ by 3 and the bond from $v_4$ to $v_1$ by 4, 
the bonds in the opposite directions by $\hat 1, \hat 2, \hat 3$ and $\hat 4$ 
(see Figure~\ref{fig3}). Let $\Gamma_2'$ be the rest of the graph $\Gamma_2$; 
it can be connected with the square by bonds in both directions, we denote them 
in Figure~\ref{fig3} by edges with arrows in both directions. Now we delete 
bonds $\hat 1, \hat 2, \hat 3$ and $\hat 4$ (see Figure~\ref{fig4}), there 
arise ``ghost edges'' in the square (explicitly shown), and there may arise 
``ghost edges'' from vertices of the square to edges of the rest of the graph 
(represented by dashed edges between the square and $\Gamma_2'$). Since the 
pseudo orbit can continue from the bond 1 to the bond 4 (with scattering 
amplitude equal to the scattering amplitude from 1 to $\hat 1$ with the opposite 
sign), from the bond 2 to the bond 1, etc., one can effectively represent the 
oriented graph by Figure~\ref{fig5}. 

It is clear that the effective size has been reduced by $2\ell$, because four 
edges have been deleted. The term of the resonance condition with the highest 
multiple of $ik$ in the exponent corresponds to the contribution of irreducible 
pseudo orbits on all remaining ``non-ghost'' bonds (the pseudo orbits may or 
may not use the ``ghost edges''). The contribution of the pseudo orbits $(1234)$ 
and $(12)(34)$ cancels out, because both pseudo orbits differ only in the number 
of orbits, hence there is a factor of $-1$. A similar argument holds also for the 
pair of pseudo orbits $(1432)$ and $(14)(32)$ and all irreducible pseudo orbits 
that include these pseudo orbits. 

Now we show why also the second highest term is zero. It includes the 
contribution of all bonds but one. If the non-included bond is not 1, 2, 3 or 4, 
the contributions are canceled due to the previous argument. If e.g. the bond 4 
is not included, it would mean that one must go from one of the lower vertices 
in Figure~\ref{fig5} to the other through $\Gamma_2'$. This is not possible 
because the irreducible pseudo orbit has to include all the bonds but 4, none 
of the bonds in part $\Gamma_2'$ is deleted and there is no ``ghost edge'' 
ending in the bond 1, 2, 3 or 4. If there exists a path through $\Gamma_2'$, 
from one vertex to another, then the path in the opposite direction cannot be 
covered by the irreducible pseudo orbit. 

Therefore, at least 6 directed edges of the former graph $\Gamma_2$ are not used 
and the effective size is reduced by $3 \ell$.
\end{proof}

Finally, we state what the positions of the resolvent resonances are.

\begin{theorem}
\label{thm-expliciteres} Let us assume an equilateral graph (lengths $\ell$)
with standard coupling. Let the eigenvalues of $Q \tilde \Sigma$ be $c_j =
r_j \mathrm{e}^{i\varphi_j}$. Then the resolvent resonances are $\lambda=
k^2 $ with $k = \frac{1}{\ell}(-\varphi_j+ 2n\pi+i\ln{r_j})$, $n\in\hbox{\bb Z}$. 
Moreover, $|c_j|\leq 1$ and for a graph with no edge starting
and ending in one vertex also $\sum_{j=1}^{2N} c_j = 0$.
\end{theorem}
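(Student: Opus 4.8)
The plan is to treat the three assertions separately, all resting on the resonance condition of Theorem~\ref{thm-rescon}. The crucial preliminary observation is that, by Corollary~\ref{cor-stan}, standard coupling makes every effective vertex-scattering matrix $\tilde\sigma_v$ energy-independent, so that $Q\tilde\Sigma$ is a \emph{constant} matrix and the numbers $c_j = r_j\,\mathrm{e}^{i\varphi_j}$ are genuine fixed eigenvalues rather than $k$-dependent quantities.

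For the positions I would substitute the equilateral assumption $L = \ell\,I_{2N}$ into Theorem~\ref{thm-rescon}, so that $\mathrm{e}^{ikL} = \mathrm{e}^{ik\ell}I_{2N}$ and the resonance condition reduces to $\mathrm{det\,}(\mathrm{e}^{ik\ell}Q\tilde\Sigma - I_{2N}) = 0$. This vanishes exactly when $1$ is an eigenvalue of $\mathrm{e}^{ik\ell}Q\tilde\Sigma$, i.e. when $\mathrm{e}^{-ik\ell} = c_j$ for some $j$. Taking all branches of the complex logarithm of $\mathrm{e}^{-ik\ell} = r_j\,\mathrm{e}^{i\varphi_j}$ and solving for $k$ (using $1/(-i) = i$) yields $k = \frac1\ell(-\varphi_j - 2n\pi + i\ln r_j)$, $n\in\mathbb{Z}$, which is the stated formula once one notes that the sign of $2n\pi$ is immaterial since $n$ runs over all of $\mathbb{Z}$.

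For $|c_j|\le 1$ I would show that $Q\tilde\Sigma$ is a contraction. The matrix $Q$ is a permutation, hence unitary; moreover the similarity in the definition of $\tilde\Sigma$ merely reorders the amplitudes $\alpha_b^{\mathrm{in}}$, so it too is a permutation and $\tilde\Sigma$ is unitarily equivalent to the block-diagonal matrix with blocks $\tilde\sigma_v$. Each block $\tilde\sigma_v = \frac{2}{n+m}J_n - I_n$ is real symmetric with eigenvalues $\frac{n-m}{n+m}$ and $-1$, all of modulus at most $1$, so $\|\tilde\sigma_v\| \le 1$ and therefore $\|\tilde\Sigma\| \le 1$. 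Submultiplicativity then gives $\|Q\tilde\Sigma\|\le\|Q\|\,\|\tilde\Sigma\|\le 1$, and since every eigenvalue of a matrix is bounded in modulus by its operator norm, $|c_j|\le 1$.

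For the trace identity I would write $\sum_j c_j = \mathrm{tr\,}(Q\tilde\Sigma)$ and read off the diagonal entries through the orbit interpretation fixed by the definition of $A_{\gamma}$: the entry $(Q\tilde\Sigma)_{bb} = S_{bb}$ is the amplitude of the single-step transition from the bond $b$ to itself, which can be nonzero only if the terminal vertex of $b$ coincides with its initial vertex, i.e. only if $b$ is a loop. It is precisely the factor $Q$ that converts the backscattering diagonal of $\tilde\Sigma$ into the self-transition diagonal of $S$. Under the hypothesis that no edge starts and ends at the same vertex there are no such loops, so every diagonal entry vanishes and $\mathrm{tr\,}(Q\tilde\Sigma) = \sum_j c_j = 0$. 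I expect this last step to be the main obstacle, since it requires pinning down the bond-ordering conventions carefully enough to be sure that it is the diagonal of $S = Q\tilde\Sigma$, and not that of $\tilde\Sigma$ itself, which records self-transitions.
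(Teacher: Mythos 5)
Your proof is correct, and two of its three parts coincide with the paper's argument: the positions of the resonances are obtained exactly as in the paper by reducing $\mathrm{det\,}(\mathrm{e}^{ik\ell}Q\tilde\Sigma-I_{2N})=0$ to $\mathrm{e}^{ik\ell}c_j=1$ and taking logarithms (your sign on $2n\pi$ is immaterial, as you note), and the trace identity is the paper's observation that absence of loops forces zeros on the diagonal of $S=Q\tilde\Sigma$ --- your extra care about which diagonal records self-transitions is a welcome sharpening rather than a deviation. Where you genuinely depart from the paper is the bound $|c_j|\le 1$. The paper argues by contradiction from self-adjointness: if some $r_j>1$, the corresponding resonances would have $\mathrm{Im\,}k=\ln r_j/\ell>0$, the generalized eigenfunction $c\,\mathrm{e}^{ikx}$ on the halflines would be square integrable, and $\lambda=k^2$ would be a non-real eigenvalue of the self-adjoint Hamiltonian. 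You instead prove directly that $Q\tilde\Sigma$ is a contraction: each block $\tilde\sigma_v=\frac{2}{n+m}J_n-I_n$ is real symmetric with eigenvalues $\frac{n-m}{n+m}$ and $-1$, hence of norm at most one, and permutation conjugation and multiplication by the unitary $Q$ preserve this, so the spectral radius is at most one. Your route is purely linear-algebraic and self-contained (no appeal to the spectral theory of $H$ or to the equivalence of resonances with generalized eigenfunctions), but it leans on the explicit form of $\tilde\sigma_v$ for standard coupling; the paper's argument is softer and would transfer to any coupling for which the operator is self-adjoint, at the price of invoking the interpretation of $\mathrm{Im\,}k>0$ solutions as genuine eigenfunctions. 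Both are valid for the theorem as stated.
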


\begin{proof}
The resonance condition is
$$
  \prod_{j=1}^{2N} (\mathrm{e}^{ik\ell} c_j -1) = 0\,,
$$ 
hence we have for $k = k_{\mathrm{R}}+ ik_{\mathrm{I}}$
$$
  r_j \mathrm{e}^{-k_{\mathrm{I}}\ell}\mathrm{e}^{ik_{\mathrm{R}} \ell}\mathrm{e}^{i\varphi_j} = 1\,,
$$
from which the claim follows. For $r_j>1$ we would have positive imaginary part 
of $k$ which would contradict the fact that eigenvalues of the selfadjoint 
Hamiltonian are real (the corresponding generalized eigenfunction would be 
square integrable). If the graph does not have any edge starting and ending 
in one vertex, then there are zeros on the diagonal of $Q \tilde \Sigma$, hence 
its trace (the sum of its eigenvalues) is zero. 
\end{proof}

\section{Examples}

In this section we show three particular examples that illustrate the
general behavior. In the first example the methods of pseudo orbit expansion
and deleting edges are explained. In the second example we deal with
deleting the directed bonds again. The third example shows that the symmetry
of the graph is not sufficient in order to obtain the effective size smaller
than it is expected from the bound in Theorem~\ref{thm-main1}.

\subsection{Three abscissas and three halflines}

\label{ex0} In this example we show in detail how the method of pseudo orbit
expansion and the method of ``deleting
edges'' are used. Let us consider a tree graph with three
internal edges of lengths $\ell $ and three halflines attached in the
central vertex (see Figure~\ref{fig11}). There is standard coupling in the
central vertex and Dirichlet coupling at the spare ends of the internal
edges.

\begin{figure}[ht]
\centering
\includegraphics[width=0.35\textwidth]{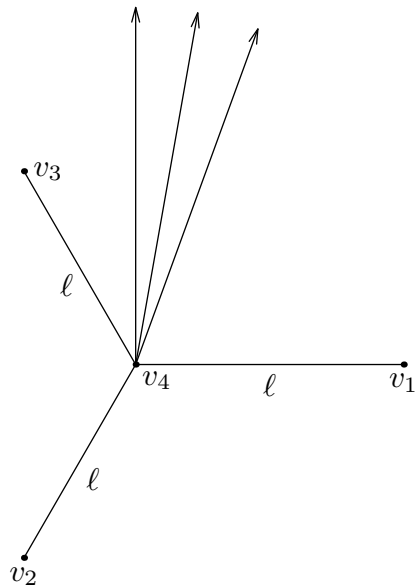}
\caption{Graph $\Gamma$ in example~\protect\ref{ex0}.}
\label{fig11}
\end{figure}

\begin{figure}[ht]
\begin{minipage}[b]{0.45\linewidth}
\centering
\includegraphics[width=\textwidth]{fig12}
\caption{The graph $\Gamma_2$ in example~\ref{ex0}.}
\label{fig12}
\end{minipage}
\hspace{0.5cm} 
\begin{minipage}[b]{0.45\linewidth}
\centering
\includegraphics[width=\textwidth]{fig13}
\caption{The graph $\Gamma_2$ after deleting the bond $3$.}
\label{fig13}
\end{minipage}
\end{figure}

Since there is Dirichlet coupling for the vertices $v_1$, $v_2$ and $v_3$,
one has $U_v = -1$ for $v = v_1,\ v_2,\ v_3$. From (\ref{eq-utilde}) we have 
$\tilde U_v = -1$ and by Theorem~\ref{thm-evsm} $\tilde \sigma_v = -1$.
Since there is standard coupling in the middle vertex, we have from
Corollary~\ref{cor-stan} $\sigma_{v_4} = \frac{1}{3} J_3 - I_3 = \frac{1}{3}%
\left(%
\begin{array}{ccc}
-2 & 1 & 1 \\ 
1 & -2 & 1 \\ 
1 & 1 & -2%
\end{array}%
\right)$.

The graph $\Gamma _{2}$ is obtained by ``cutting
off'' the halflines and replacing each edge of the compact
part by two directed bonds (see Figure~\ref{fig12}). One can easily see that
there are no irreducible pseudo orbits on the odd number of edges. There is
one (trivial) pseudo orbit on zero edges. There are three irreducible pseudo
orbits on two edges $(1\hat{1})$; $(2\hat{2})$ and $(3\hat{3})$. For each of
these pseudo orbits we have $A_{\bar{\gamma}}=(-1)\left( -\frac{2}{3}\right) 
$ (there is $-1$ for the scattering from $\hat{1}$ to $1$ and diagonal entry
of the effective vertex-scattering matrix $-2/3$ for scattering from $1$ to 
$\hat{1}$), $m_{\bar{\gamma}}=1$ and $\ell _{\bar{\gamma}}=2\ell $. We have
six irreducible pseudo orbits on four edges $(1\hat{1})(2\hat{2})$; 
$(1\hat{1})(3\hat{3})$; $(2\hat{2})(3\hat{3})$; $(1\hat{2}2\hat{1})$; 
$(1\hat{3}3\hat{1})$ and $(2\hat{3}3\hat{2})$. And there are six irreducible 
pseudo orbits on six edges $(1\hat{1})(2\hat{2})(3\hat{3})$; 
$(1\hat{2}2\hat{1})(3\hat{3})$; $(1\hat{3}3\hat{1})(2\hat{2})$; 
$(2\hat{3}3\hat{2})(1\hat{1})$; $(1\hat{2}2\hat{3}3\hat{1})$ and 
$(1\hat{3}3\hat{2}2\hat{1})$. The first one consists
of three periodic orbits, the next three of two periodic orbits, and the
last two of one periodic orbit. The coefficients by the exponentials are
hence the following 
\begin{eqnarray*}
\mathrm{exp\,}0 &:&1\,, \\
\mathrm{exp\,}(2ik\ell ) &:&(-1)\left( -\frac{2}{3}\right) (-1)^{1}\cdot
3=-2\,, \\
\mathrm{exp\,}(4ik\ell ) &:&(-1)^{2}\left( -\frac{2}{3}\right)
^{2}(-1)^{2}\cdot 3+(-1)^{2}\left( \frac{1}{3}\right) ^{2}(-1)^{1}\cdot
3=1\,, \\
\mathrm{exp\,}(6ik\ell ) &:&(-1)^{3}\left( -\frac{2}{3}\right)
^{3}(-1)^{3}+(-1)^{3}\left( \frac{1}{3}\right) ^{2}\left( -\frac{2}{3}\right) (-1)^{2}\cdot 3+ \\
&&+(-1)^{3}\left( \frac{1}{3}\right) ^{3}(-1)^{1}\cdot 2=0\,. \\
\end{eqnarray*}

Since the vertex $v_{4}$ was balanced, we can delete one directed bond, e.g.
the bond~$3$. The graph $\Gamma _{2}$ after deleting the edge and
introducing ``ghost edges'' is shown in
Figure~\ref{fig13}. One can easily see that again we do not have irreducible
pseudo orbits on an even number of ``non-ghost'' edges. We again have the trivial irreducible
pseudo orbit on zero edges. We have the following four irreducible pseudo
orbits on two ``non-ghost'' edges: $(1\hat{1})$; 
$(2\hat{2})$; $(\hat{3}3'1)$ and $(\hat{3}3'' 2)$. 
The scattering amplitude for the orbit $(\hat{3}3'1)$ from $\hat{3}$ 
via $3'$ to $1$ is $+1$, because the scattering amplitude in the
former graph from $\hat{3}$ to $3$ was $-1$. Finally, we have the following
five irreducible pseudo orbits on four ``non-ghost'' edges: $(1\hat{1})(2\hat{2})$; 
$(1\hat{1})(\hat{3}3'' 2)$; $(2\hat{2})(\hat{3}3'1)$; 
$(1\hat{2}2\hat{1})$; $(1\hat{3}3'' 2\hat{1})$ and 
$(2\hat{3}3'1\hat{2})$. Clearly, there are no pseudo orbits on six 
edges, since we have deleted one edge. The coefficient by the exponentials 
in the resonance condition are 
\begin{eqnarray*}
\mathrm{exp\,}0 &:&1\,, \\
\mathrm{exp\,}(2ik\ell ) &:&(-1)\left( -\frac{2}{3}\right) (-1)^{1}\cdot 2+1\frac{1}{3}(-1)^{1}\cdot 2=-2\,, \\
\mathrm{exp\,}(4ik\ell ) &:&(-1)^{2}\left( -\frac{2}{3}\right)
^{2}(-1)^{2}+(-1)\left( -\frac{2}{3}\right) 1\frac{1}{3}(-1)^{2}\cdot 2+ \\
&&+(-1)^{2}\left( \frac{1}{3}\right) ^{2}(-1)^{1}+(-1)\left( \frac{1}{3}\right) ^{2}1(-1)^{1}\cdot 2=1\,,
\end{eqnarray*}

In both cases the resonance condition is $1-2\,\mathrm{e}^{2ik\ell}+\mathrm{e}^{4ik\ell} = 0$. 
There are resonances at $k^2$, $k = \frac{n\pi}{2\ell}$, 
$n\in \hbox{\bb Z}$ with multiplicity 2.

\subsection{Square with the diagonal and all vertices balanced} \label{ex1} 
We assume an equilateral graph with four internal edges
in the square and the fifth edge as diagonal of this square. All four
vertices are balanced, i.e. to two of them two halflines are attached and to
the other two vertices three halflines are attached. There is standard
coupling in all vertices. The oriented graph~$\Gamma_2$ is shown in 
Figure~\ref{fig6}. 

\begin{figure}[ht]
\begin{minipage}[b]{0.45\linewidth}
\centering
\includegraphics[width=\textwidth]{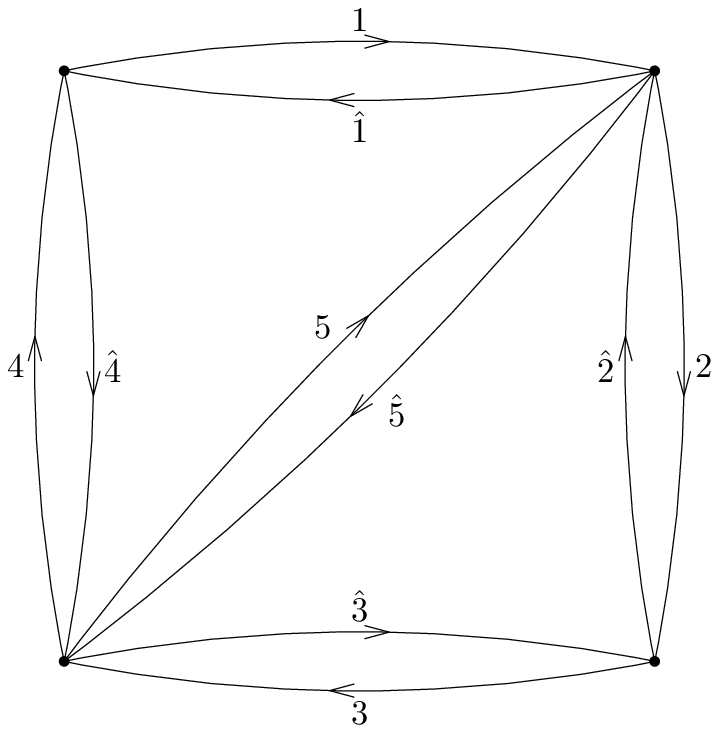}
\caption{The graph $\Gamma_2$ for the graph in example~\ref{ex1}.}
\label{fig6}
\end{minipage}
\hspace{0.5cm}
\begin{minipage}[b]{0.45\linewidth}
\centering
\includegraphics[width=\textwidth]{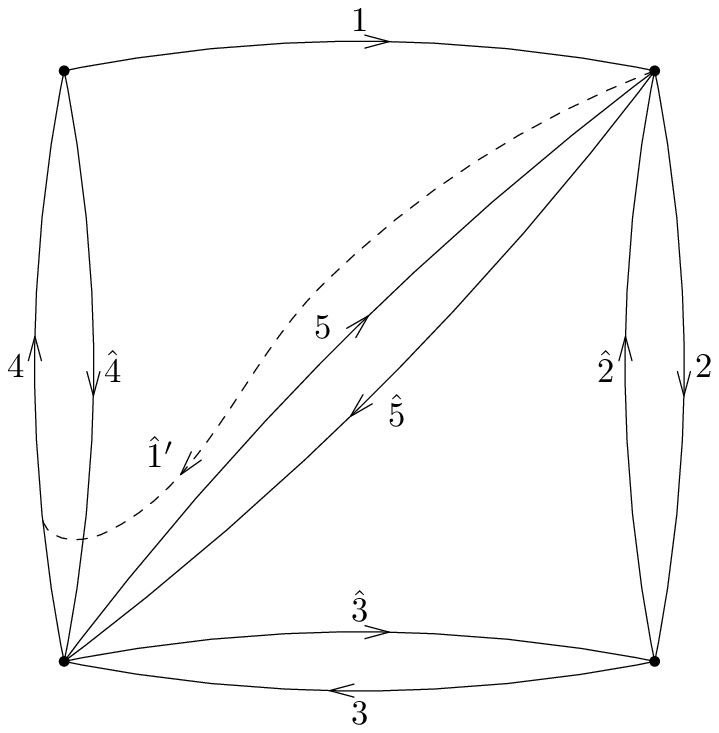}
\caption{The graph $\Gamma_2$ after deleting the bond $\hat 1$.}
\label{fig7}
\end{minipage}
\end{figure} 

Let us denote the vertex from which the edge $1$ starts by $v_1$ and
the vertex, where $1$ ends by $v_2$. Then the effective vertex-scattering
matrices are 
$$
\tilde\sigma_{v_1} = \frac{1}{2}\left(%
\begin{array}{cc}
-1 & 1 \\ 
1 & -1%
\end{array}%
\right)\,,\quad \tilde\sigma_{v_2} = \frac{1}{3}\left(%
\begin{array}{ccc}
-2 & 1 & 1 \\ 
1 & -2 & 1 \\ 
1 & 1 & -2%
\end{array}%
\right)\,, 
$$
similarly for the other two vertices. Hence, the matrix $Q\tilde \Sigma$ is
equal to 
$$
\begin{array}{c|cccccccccc}
& 1 & 2 & 3 & 4 & 5 & \hat 1 & \hat 2 & \hat 3 & \hat 4 & \hat 5 \\ \hline
1 & 0 & 0 & 0 & 1/2 & 0 & -1/2 & 0 & 0 & 0 & 0 \\ 
2 & 1/3 & 0 & 0 & 0 & 1/3 & 0 & -2/3 & 0 & 0 & 0 \\ 
3 & 0 & 1/2 & 0 & 0 & 0 & 0 & 0 & -1/2 & 0 & 0 \\ 
4 & 0 & 0 & 1/3 & 0 & 0 & 0 & 0 & 0 & -2/3 & 1/3 \\ 
5 & 0 & 0 & 1/3 & 0 & 0 & 0 & 0 & 0 & 1/3 & -2/3 \\ 
\hat 1 & -2/3 & 0 & 0 & 0 & 1/3 & 0 & 1/3 & 0 & 0 & 0 \\ 
\hat 2 & 0 & -1/2 & 0 & 0 & 0 & 0 & 0 & 1/2 & 0 & 0 \\ 
\hat 3 & 0 & 0 & -2/3 & 0 & 0 & 0 & 0 & 0 & 1/3 & 1/3 \\ 
\hat 4 & 0 & 0 & 0 & -1/2 & 0 & 1/2 & 0 & 0 & 0 & 0 \\ 
\hat 5 & 1/3 & 0 & 0 & 0 & -2/3 & 0 & 1/3 & 0 & 0 & 0 
\end{array}%
\,. 
$$
The edges, to which the rows and columns correspond, are denoted on the left
and on the top of the matrix.

Now we delete the bond $\hat 1$ (Figure~\ref{fig7}). Deleting this
edge is equivalent to the unitary transformation $V_1^{-1} Q \tilde \Sigma
V_1$, where $V_1$ has 1 on the diagonal, 1 in the sixth column
(corresponding to the bond $\hat 1$) and the fourth row (corresponding to
the bond 4), other entries of this matrix are zero ($(V_1)_{ii} = 1$, 
$\forall i$, $(V_1)_{4\hat 1} = 1$, $(V_1)_{ij} =0$ otherwise). Its inverse
has 1 on the diagonal and $-1$ in the sixth column and fourth row. We obtain
the matrix 
$$
\begin{array}{c|cccccccccc}
& 1 & 2 & 3 & 4 & 5 & \hat 1 & \hat 2 & \hat 3 & \hat 4 & \hat 5 \\ \hline
1 & 0 & 0 & 0 & 1/2 & 0 & 0 & 0 & 0 & 0 & 0 \\ 
2 & 1/3 & 0 & 0 & 0 & 1/3 & 0 & -2/3 & 0 & 0 & 0 \\ 
3 & 0 & 1/2 & 0 & 0 & 0 & 0 & 0 & -1/2 & 0 & 0 \\ 
4 & \mathbf{2/3} & 0 & 1/3 & 0 & \mathbf{-1/3} & 0 & \mathbf{-1/3} & 0 & -2/3
& 1/3 \\ 
5 & 0 & 0 & 1/3 & 0 & 0 & 0 & 0 & 0 & 1/3 & -2/3 \\ 
\hat 1 & -2/3 & 0 & 0 & 0 & 1/3 & 0 & 1/3 & 0 & 0 & 0 \\ 
\hat 2 & 0 & -1/2 & 0 & 0 & 0 & 0 & 0 & 1/2 & 0 & 0 \\ 
\hat 3 & 0 & 0 & -2/3 & 0 & 0 & 0 & 0 & 0 & 1/3 & 1/3 \\ 
\hat 4 & 0 & 0 & 0 & -1/2 & 0 & 0 & 0 & 0 & 0 & 0 \\ 
\hat 5 & 1/3 & 0 & 0 & 0 & -2/3 & 0 & 1/3 & 0 & 0 & 0 
\end{array}%
\,. 
$$
In this matrix the sixth column consists of all zeros and there are three
other new entries in the fourth row that are printed in bold. Since the new
entries are in the fourth row, there must be a ``ghost edge'' $\hat
1'$ from the vertex $v_2$ pointing to the bond 4. We can use it in
pseudo orbits containing one of the bonds 1, 5, $\hat 2$ continuing then to
the bond 4 with the scattering amplitudes in bold above.

\begin{figure}[ht]
\begin{minipage}[b]{0.45\linewidth}
\centering
\includegraphics[width=\textwidth]{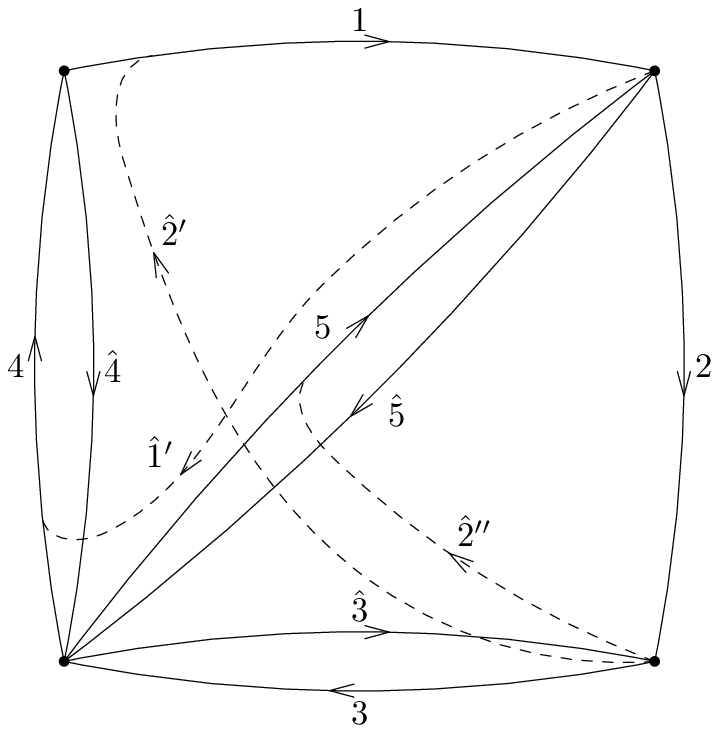}
\caption{The graph $\Gamma_2$ after deleting the bonds $\hat 1$ and $\hat 2$.}
\label{fig8}
\end{minipage}
\hspace{0.5cm}
\begin{minipage}[b]{0.45\linewidth}
\centering
\includegraphics[width=\textwidth]{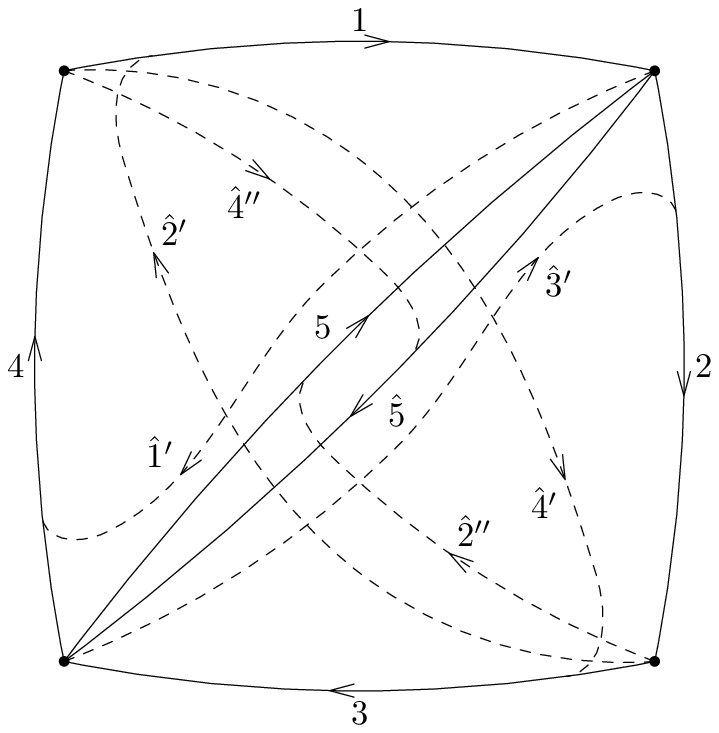}
\caption{The graph $\Gamma_2$ after deleting the bonds $\hat 1$, $\hat 2$, $\hat 3$ and~$\hat 4$.}
\label{fig9}
\end{minipage}
\end{figure} 

Now we delete the bond $\hat{2}$ (Figure~\ref{fig8}). Since two
other bonds end in the vertex $v_{2}$ (bonds 1 and 5), there will be two
``ghost edges''. We use the unitary
transformation $V_{2}^{-1}V_{1}^{-1}Q\tilde{\Sigma}V_{1}V_{2}$. $V_{2}$ has
1 on the diagonal and 1 in the seventh column (corresponding to the edge 
$\hat{2}$) and rows 1 and 5. Its inverse has nondiagonal terms with opposite
signs. After the transformation we obtain 
$$
\begin{array}{c|cccccccccc}
& 1 & 2 & 3 & 4 & 5 & \hat{1} & \hat{2} & \hat{3} & \hat{4} & \hat{5} \\ 
\hline
1 & 0 & \mathbf{1/2} & 0 & 1/2 & 0 & 0 & 0 & \mathbf{-1/2} & 0 & 0 \\ 
2 & 1/3 & 0 & 0 & 0 & 1/3 & 0 & 0 & 0 & 0 & 0 \\ 
3 & 0 & 1/2 & 0 & 0 & 0 & 0 & 0 & -1/2 & 0 & 0 \\ 
4 & 2/3 & 0 & 1/3 & 0 & -1/3 & 0 & 0 & 0 & -2/3 & 1/3 \\ 
5 & 0 & \mathbf{1/2} & 1/3 & 0 & 0 & 0 & 0 & {-\mathbf{1/2}} & 1/3 & -2/3 \\ 
\hat{1} & -2/3 & 0 & 0 & 0 & 1/3 & 0 & 0 & 0 & 0 & 0 \\ 
\hat{2} & 0 & -1/2 & 0 & 0 & 0 & 0 & 0 & 1/2 & 0 & 0 \\ 
\hat{3} & 0 & 0 & -2/3 & 0 & 0 & 0 & 0 & 0 & 1/3 & 1/3 \\ 
\hat{4} & 0 & 0 & 0 & -1/2 & 0 & 0 & 0 & 0 & 0 & 0 \\ 
\hat{5} & 1/3 & 0 & 0 & 0 & -2/3 & 0 & 0 & 0 & 0 & 0 
\end{array}%
\,. 
$$
The seventh column (corresponding to $\hat{2}$) has all entries equal to
zero; other new entries are printed in bold. These entries correspond to the
new ``ghost edges'' $\hat{2}'$ and 
$\hat{2}''$. Similarly, we delete edges $\hat{3}$ and $\hat{4}$ 
(Figure~\ref{fig9}); the matrix $Q\tilde{\Sigma}$ after these
transformations is 
$$
\begin{array}{c|cccccccccc}
& 1 & 2 & 3 & 4 & 5 & \hat{1} & \hat{2} & \hat{3} & \hat{4} & \hat{5} \\ 
\hline
1 & 0 & 1/2 & 0 & 1/2 & 0 & 0 & 0 & 0 & 0 & 0 \\ 
2 & 1/3 & 0 & 2/3 & 0 & 1/3 & 0 & 0 & 0 & 0 & -1/3 \\ 
3 & 0 & 1/2 & 0 & 1/2 & 0 & 0 & 0 & 0 & 0 & 0 \\ 
4 & 2/3 & 0 & 1/3 & 0 & -1/3 & 0 & 0 & 0 & 0 & 1/3 \\ 
5 & 0 & 1/2 & 1/3 & 0 & 0 & 0 & 0 & 0 & 0 & -2/3 \\ 
\hat{1} & -2/3 & 0 & 0 & 0 & 1/3 & 0 & 0 & 0 & 0 & 0 \\ 
\hat{2} & 0 & -1/2 & 0 & 0 & 0 & 0 & 0 & 0 & 0 & 0 \\ 
\hat{3} & 0 & 0 & -2/3 & 0 & 0 & 0 & 0 & 0 & 0 & 1/3 \\ 
\hat{4} & 0 & 0 & 0 & -1/2 & 0 & 0 & 0 & 0 & 0 & 0 \\ 
\hat{5} & 1/3 & 0 & 0 & 1/2 & -2/3 & 0 & 0 & 0 & 0 & 0 
\end{array}%
\,. 
$$
The eigenvalues of the matrix $Q\tilde{\Sigma}$ are $-2/3$, $-1/3$, $-1$, 1
with multiplicity 2, 0 with multiplicity 5. There is one Jordan block 
$\left( 
\begin{array}{cc}
0 & 1 \\ 
0 & 0%
\end{array}%
\right) $ in the Jordan form of $Q\tilde{\Sigma}$. The resonance condition
can be obtained from the eigenvalues by the equation at the beginning of the
proof of Theorem~\ref{thm-expliciteres} 
$$
1-\frac{16}{9}\mathrm{e}^{2ik\ell }-\frac{2}{9}\mathrm{e}^{3ik\ell }+
\frac{7}{9}\mathrm{e}^{4ik\ell }+\frac{2}{9}\mathrm{e}^{5ik\ell }=0\,. 
$$
By Theorem~\ref{thm-expliciteres} we obtain that the positions of the
resonances are such $\lambda =k^{2}$ with $k=\frac{1}{\ell }[(2n+1)\pi -i\ln 
{3}]$, $k=\frac{1}{\ell }[(2n+1)\pi -i\ln {\frac{3}{2}}]$, 
$k=\frac{1}{\ell }(2n+1)\pi $ and $k=\frac{1}{\ell }2n\pi $ with multiplicity 2, 
$n\in \hbox{\bb Z}$.

\subsection{Fully connected graph on four vertices with all vertices
balanced}

\begin{figure}[ht]
\centering
\includegraphics[width=0.45\textwidth]{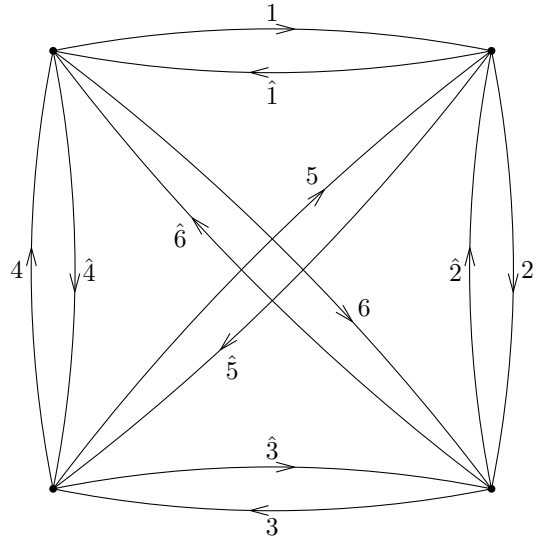}
\caption{Graph $\Gamma_2$ for fully connected graph on four vertices.}
\label{fig10}
\end{figure} 

In this subsection we assume a graph on four vertices, every two
vertices are connected by one edge of length $\ell$ and there are three
halflines attached at each vertex, hence each vertex is balanced. The
directed graph $\Gamma_2$ is shown in Figure~\ref{fig10}. For each vertex we
have the effective vertex-scattering matrix 
$$
\tilde\sigma_v = \frac{1}{3}\left(%
\begin{array}{ccc}
-2 & 1 & 1 \\ 
1 & -2 & 1 \\ 
1 & 1 & -2%
\end{array}%
\right)\,. 
$$
The matrix $Q \tilde \Sigma$ is 
$$
  \begin{array}{c|cccccccccccc}&1&2&3&4&5&6&\hat 1&\hat 2&\hat 3&\hat 4&\hat 5&\hat 6\\ \hline
       1 & 0 & 0 & 0 & 1/3 & 0 & 0 & -2/3 & 0 & 0 & 0 & 0 & 1/3\\
       2 & 1/3 & 0 & 0 & 0 & 1/3 & 0 & 0 & -2/3 & 0 & 0 & 0 & 0\\
       3 & 0 & 1/3 & 0 & 0 & 0 & 1/3 & 0 & 0 & -2/3 & 0 & 0 & 0\\
       4 & 0 & 0 & 1/3 & 0 & 0 & 0 & 0 & 0 & 0 & -2/3 & 1/3 & 0\\
       5 & 0 & 0 & 1/3 & 0 & 0 & 0 & 0 & 0 & 0 & 1/3 & -2/3 & 0\\
       6 & 0 & 0 & 0 & 1/3 & 0 & 0 & 1/3 & 0 & 0 & 0 & 0 & -2/3\\
  \hat 1 & -2/3 & 0 & 0 & 0 & 1/3 & 0 & 0 & 1/3 & 0 & 0 & 0 & 0\\
  \hat 2 & 0 & -2/3 & 0 & 0 & 0 & 1/3 & 0 & 0 & 1/3 & 0 & 0 & 0\\
  \hat 3 & 0 & 0 & -2/3 & 0 & 0 & 0 & 0 & 0 & 0 & 1/3 & 1/3 & 0\\
  \hat 4 & 0 & 0 & 0 & -2/3 & 0 & 0 & 1/3 & 0 & 0 & 0 & 0 & 1/3\\
  \hat 5 & 1/3 & 0 & 0 & 0 & -2/3 & 0 & 0 & 1/3 & 0 & 0 & 0 & 0\\
  \hat 6 & 0 & 1/3 & 0 & 0 & 0 & -2/3 & 0 & 0 & 1/3 & 0 & 0 & 0\\
  \end{array}\,.
$$
Its eigenvalues are $-1$ with multiplicity 2, 1 with multiplicity 3, $-1/3$
with multiplicity 3 and 0 with multiplicity 4. There is no Jordan block. The
resonance condition is 
$$
1- \frac{8}{3}\mathrm{e}^{2ik\ell}-\frac{8}{27}\mathrm{e}^{3ik\ell}+
\frac{62}{27}\mathrm{e}^{4ik\ell}+\frac{16}{27}\mathrm{e}^{5ik\ell}
-\frac{16}{27}\mathrm{e}^{6ik\ell}-\frac{8}{27}\mathrm{e}^{7ik\ell}
- \frac{1}{27}\mathrm{e}^{8ik\ell} = 0\,. 
$$
Similarly to the previous example we can find the positions of the resolvent
resonances $\lambda = k^2$ with $k = \frac{1}{\ell}(2n+1)\pi$ with
multiplicity 2, $k = \frac{1}{\ell}2n\pi$ with multiplicity 3 and $k = \frac{%
1}{\ell}[(2n+1)\pi-i\ln{3}]$ with multiplicity 3, $n\in \hbox{\bb Z}$.

This example shows that the symmetry of the graph does not assure
that it will have the effective size smaller than the bound in 
Theorem~\ref{thm-main1}. This graph has four zeros as eigenvalues of $Q\tilde{\Sigma}$,
hence the effective size is $4\ell $, as we expect from four balanced
vertices. Although this graph is very symmetric, we do not have smaller
effective size in contrary to the example in Theorem 7.3 in \cite{DEL}.

\section*{Acknowledgements}

This research was supported by Grant 15-14180Y of the Czech Science
Foundation. The author thanks colleagues from the Nuclear Physics Institute,
Czech Academy of Sciences and the Faculty of Science,
University of Hradec Kr\'alov\'e, namely P.~Exner and J.~K\v{r}\'{i}\v{z} who
provided insight and expertise as well as referees for the useful remarks. 

\section*{References}

\end{document}